\newtheorem{theorem}{Theorem}
\newtheorem{appointing}{Appointing}
\newtheorem{acknowledgement}{Acknowledgement}
\newtheorem{definition}{Definition}
\newtheorem{proposition}{Proposition}
\newtheorem{remark}{Remark}
\newtheorem{conjecture}{Conjecture}
\newtheorem{proof}{Proof}
\begin{document}
\title{Study of the numerical solutions for the Electrical Impedance Equation in the plane: A pseudoanalytic approach of the forward Dirichlet boundary value problem.}
\author{M. P. Ramirez T.$^{(1)}$, C. M. A. Robles G.$^{(1)(2)}$\\
and R. A. Hernandez-Becerril$^{(1)(2)}$\\
\small{$^{(1)}$Communications and Digital Signal Processing Group,}\\ 
\small{Faculty of Engineering, La Salle University,}\\
\small{B. Franklin 47, C.P. 06140, Mexico. marco.ramirez@lasallistas.org.mx} \\ 
\small{$^{(2)}$SEPI, ESIME Culhuacan, National Polytechnic Institute,}\\
\small{Av. Santa Ana No. 1000, C.P. 04430, Mexico.}}

\date{}
\maketitle
\begin{abstract}
Employing a limiting case of a conjecture for constructing piecewise separable-variables functions, the elements of the Pseudoanalytic Function Theory are used for numerically approaching solutions of the forward Dirichlet boundary value problem, corresponding to the Electrical Impedance Equation in the plane, when the electrical conductivity is an arbitrary non-vanishing function, fully defined within a bounded domain. The new method is studied considering a variety of examples when the bounded domain coincides with the unit circle, and it is also included a description of its behaviour in non-smooth domains, selecting special cases that do not require additional regularization techniques, for warranting the convergence of the approach at the non-smooth regions, when certain requirements are fulfilled.
\end{abstract}

\section{Introduction.}

The study of the Electrical Impedance Equation
\begin{equation}
\nabla\cdot\left(\sigma\nabla u\right)=0,
\label{int:00}
\end{equation}
where $\sigma$ represents the conductivity function and $u$ denotes the electric potential, is fundamental for the proper understanding of a wide variety of physical problems, among which we find the Electrical Impedance Tomography, correctly stated in mathematical form by A.P. Calderon \cite{calderon} in 1980. Indeed, most of the algorithms that approach solutions for this inverse problem, within a bounded domain $\Omega\left(\mathbb{R}^{2}\right)$, are based on iterative methods that examine solutions for the direct problem, and introduce a certain kind of variations in the conductivity $\sigma$, attempting to minimize the difference between the approached solution $u$, and the boundary condition $u_{\textbf{c}}\vert_{\Gamma}$(see \emph{e.g.}, the classic work \cite{webster}). Here $\Gamma$ denotes the boundary of the domain $\Omega$.

Yet, specifically talking about the Electrical Impedance Tomography in the plane, the mathematical complexity of (\ref{int:00}) has posed so strong challenges, that the problem is still considered ill posed. This is, when small variations of the conductivity $\sigma$ are introduced for minimizing the error, most of the classical numerical methods do not achieve to decrease such error under a certain limit.

At this point, it is convenient to empathize that the majority of these methods, are based upon variations of the Finite Element Method, one of the finest tools for solving partial differential equations in bounded domains $\Omega\left(\mathbb{R}^{2}\right)$. Therefore, the usage of techniques upcoming from completely different branches of the Applied Mathematics, could well show up new information about the behaviour of the solutions for the forward Dirichlet boundary value problem of (\ref{int:00}) in the plane, which eventually could allow to propose new techniques for analysing its inverse problem. If such becomes true, its common classification of ill posed problem could be reconsidered. 

In this sense, the discovering of the relation between the two-dimensional case of (\ref{int:00}) and the Vekua equation \cite{vekua}, independently achieved by V. Kravchenko in 2005 \cite{kra2005}, and K. Astala and L. P\"aiv\"arinta in 2006 \cite{astala}, possesses special relevance, since it opened a new path for analysing the solutions of (\ref{int:00}) in the plane, from the point of view of the so-called Taylor series in formal powers \cite{bers}. As a matter of fact, the Vekua equation had been deeply studied in a variety of interesting works, published almost five decades before its relation with (\ref{int:00}) would have been first noticed. Indeed, two of the most important and complete works about the Vekua equation, were published by L. Bers \cite{bers} in 1953, and I. Vekua \cite{vekua} itself, in 1962.

The number of publications that came after discovering the cited relation is long, and they are all interesting (see \emph{e.g.} \cite{cck} and \cite{ckr}). Yet, we shall remark that only some of them can be directly employed in Physics (see e.g. \cite{kpa}), since it is not always clear how to adapt the elements of the Modern Pseudoanalytic Function Theory to the experimental physical requirements.

The current pages intend to make a positive contribution in this direction. We will not reach the study of the Electrical Impedance Tomography problem, but we will provide the elements that allow the employment of the Modern Pseudoanalytic Function Theory, in a variety of cases that can be easily identified with physical experimental models.

More precisely, after providing the necessary elements of the pseudoanalytic functions, we expose the numerical method that allows the construction of the formal powers, which eventually, will provide a complete set of solutions for solving the forward Dirichlet boundary value problem of (\ref{int:00}), when $\Omega$ coincides with the unit circle. The performance of the numerical method is tested by employing one example of conductivity  that possesses a separable-variables form, and for which an exact solution is known. Indeed, that exact solution will be imposed as the boundary condition $u_{\textbf{c}}\vert_{\Gamma}$, so we can estimate the accuracy of the approaching by a measure $\mathcal{E}$, that coincides with the Lebesgue integral over the boundary $\Gamma$.

Subsequently, we use a Conjecture \cite{ioprrh} for constructing piecewise separable-variables conductivity functions, a requisite for studying the solutions of (\ref{int:00}) from the point of view of the Pseudoanalytic Function Theory \cite{kpa}. We examine the effectiveness of this idea solving once more the boundary value problem posed previously, but employing the piecewise conductivity in lieu of the original $\sigma$, making a comparison between the errors $\mathcal{E}$ reached in both cases.

Immediately after, we present a Proposition stating that any conductivity function $\sigma:\Omega\left(\mathbb{R}^{2}\right)\rightarrow\mathbb{R}$, can be considered the limiting case of a piecewise separable-variables function, at every point $(x,y)\in\Omega$. As a matter of fact, this Proposition is the vertebral column of this work, because it allows the study of the forward Dirichlet boundary value problem of (\ref{int:00}), for the cases when the exact representation of $\sigma$ is known, but does not posses a separable-variables form. Moreover, for the first time in the literature dedicated to the Applied Pseudoanalytic Function Theory, the numerical method will be employed for analysing conductivities upcoming from geometrical distributions, a very important fact for physical applications.

We close this work with a review of the behaviour of the method, when the domain $\Omega$ does not coincide with the unit circle, and possesses avoidable discontinuities in the derivative of the parametric curve describing $\Gamma$. To better analyse this case, we selected three kind of conductivities, two of them rising from geometrical distributions, that do not require any additional regularization technique, in order to warrant the convergence of the approached solutions in the corner points, when some certain conditions are fulfilled.

Notice that, even a formal comparison of this new technique with an adequate variation of the Finite Element Method is in order, on behalf of briefness we refer the reader to the results reported in \cite{ckr}, where an accurate likening of this type was performed. The reader will find that, nonetheless the contrast was made employing separable-variables conductivities, the results can be easily extended to the cases treated in the current pages.
 
\section{Preliminaries.}

Following \cite{bers}, let the complex-valued functions $F$ and $G$ satisfy the condition
\begin{equation}
\mbox{Im}\left(\overline{F}G\right)>0,
\label{pre:00}
\end{equation}
where $\overline{F}$ denotes the complex conjugate of $F$: $\overline{F}=\mbox{Re}F-i\mbox{Im}F$, and $i$ is the standard imaginary unit: $i^{2}=-1$. Thus, any complex-valued function $W$ can be expressed by means of the linear combination of $F$ and $G$:
\[
W=\phi F+\psi G,
\]
where $\phi$ and $\psi$ are purely real functions. Two complex functions that fulfil (\ref{pre:00}) shall be called an $(F,G)$-\emph{generating pair}. Bers \cite{bers} introduced the $(F,G)$-\emph{derivative} of the function $W$ according to the expression:
\begin{equation}
\partial_{(F,G)}W=\left(\partial_{z}\phi\right)F+\left(\partial_{z}\psi\right)G.
\label{pre:01}
\end{equation}
This derivative will exist if and only if
\begin{equation}
\left(\partial_{\overline{z}}\phi\right)F+\left(\partial_{\overline{z}}\psi\right)G=0,
\label{pre:02}
\end{equation}
where
\[
\partial_{z}=\partial_{x}-i\partial_{y},\ \ \ \partial_{\overline{z}}=\partial_{x}+i\partial_{y}.
\]

Notice that these operators are classically introduced with the factor $\frac{1}{2}$, but it will result more convenient to omit it in this work.

Introducing the functions
\begin{eqnarray}
A_{(F,G)}=\frac{\overline{F}\partial_{z}G-\overline{G}\partial_{z}F}{F\overline{G}-G\overline{F}},\ \ \ a_{(F,G)}=-\frac{\overline{F}\partial_{\overline{z}}G-\overline{G}\partial_{\overline{z}}F}{F\overline{G}-G\overline{F}}, \nonumber \\
B_{(F,G)}=\frac{F\partial_{z}G-G\partial_{z}F}{F\overline{G}-G\overline{F}},\ \ \ b_{(F,G)}=-\frac{G\partial_{\overline{z}}F-F\partial_{\overline{z}}G}{F\overline{G}-G\overline{F}};
\label{pre:03}
\end{eqnarray}
the expression of the $(F,G)$-derivative (\ref{pre:01}) will turn into
\begin{equation}
\partial_{(F,G)}W=\partial_{z}W-A_{(F,G)}W-B_{(F,G)}\overline{W},
\label{pre:04}
\end{equation}
and the condition (\ref{pre:02}) will be written as
\begin{equation}
\partial_{\overline{z}}W-a_{(F,G)}W-b_{(F,G)}\overline{W}=0.
\label{pre:05}
\end{equation}

The functions defined in (\ref{pre:03}) are called the \emph{characteristic coefficients} of the generating pair $(F,G)$, whereas the functions $W$, solutions of the equation (\ref{pre:05}), are named $(F,G)$-\emph{pseudoanalytic} functions. As a matter of fact, the equation (\ref{pre:05}) is known as the \emph{Vekua equation} \cite{vekua}, and it is the foundation of the present work in many senses.

The following statements were originally presented in \cite{bers} and \cite{kpa}. They have been adapted here for the purposes of this work.
\begin{theorem}
\label{th:00}
The elements of the generating pair $(F,G)$ are $(F,G)$-pseudoanalytic:
\[
\partial_{(F,G)}F=\partial_{(F,G)}G=0.
\]
\end{theorem}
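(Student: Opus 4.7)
The plan is to use the defining formula (\ref{pre:01}) for the $(F,G)$-derivative and simply spot the correct real-coefficient representation of $F$ and of $G$ themselves. The statement reduces to the observation that $F=1\cdot F+0\cdot G$ and $G=0\cdot F+1\cdot G$ are representations of the form $W=\phi F+\psi G$ with $\phi,\psi$ constant, hence annihilated by $\partial_z$.

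More carefully: since $(F,G)$ is a generating pair, condition (\ref{pre:00}) guarantees that $F$ and $G$ are linearly independent over $\mathbb{R}$ at every point (because $\operatorname{Im}(\overline{F}G)\neq 0$ forces the $2\times 2$ real matrix whose columns are $F$ and $G$, viewed as vectors in $\mathbb{R}^2$, to be nonsingular). Therefore the real scalars $\phi,\psi$ in $W=\phi F+\psi G$ are uniquely determined by $W$. For $W=F$ the unique choice is $\phi\equiv 1$, $\psi\equiv 0$, and for $W=G$ it is $\phi\equiv 0$, $\psi\equiv 1$. First I would note this uniqueness so the rest of the argument is unambiguous.

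Next I would verify that the existence condition (\ref{pre:02}) is met in both cases: for constant real $\phi,\psi$ we have $\partial_{\overline{z}}\phi=\partial_{\overline{z}}\psi=0$, so $(\partial_{\overline{z}}\phi)F+(\partial_{\overline{z}}\psi)G=0$ trivially. Then I would apply (\ref{pre:01}) directly: for $W=F$,
\[
\partial_{(F,G)}F=(\partial_z 1)F+(\partial_z 0)G=0,
\]
and for $W=G$,
\[
\partial_{(F,G)}G=(\partial_z 0)F+(\partial_z 1)G=0.
\]

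There is essentially no obstacle. The only subtlety worth mentioning is the uniqueness step, which justifies that we may read off $\phi,\psi$ from the trivial expansions; once that is granted, the conclusion is immediate from the definitions. (One could alternatively verify the identities through the equivalent form (\ref{pre:04}) using the explicit characteristic coefficients $A_{(F,G)},B_{(F,G)}$, but that approach requires a short algebraic simplification of $\partial_zF-A_{(F,G)}F-B_{(F,G)}\overline{F}$ and is strictly less economical than the route above.)
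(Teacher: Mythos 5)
Your argument is correct and is essentially the classical one: the paper itself offers no proof of this theorem (it is quoted from Bers and Kravchenko), and the standard justification in those sources is exactly your observation that $F=1\cdot F+0\cdot G$ and $G=0\cdot F+1\cdot G$ are the (unique, by the condition $\mbox{Im}\left(\overline{F}G\right)>0$) real representations, so the constancy of $\phi,\psi$ makes both the existence condition (\ref{pre:02}) and the vanishing of (\ref{pre:01}) immediate. Your remark on uniqueness is the right subtlety to flag, and nothing further is needed.
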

\begin{remark}
Let $p$ be a non-vanishing function within a bounded domain $\Omega\left(\mathbb{R}^{2}\right)$. The functions
\begin{equation}
F_{0}=p, \ \ \ G_{0}=\frac{i}{p},
\label{pre:06}
\end{equation}
constitute a generating pair, whose characteristic coefficients are
\begin{eqnarray}
\begin{array}{c}
A_{\left(F_{0},G_{0}\right)}=a_{\left(F_{0},G_{0}\right)}=0,\\
B_{\left(F_{0},G_{0}\right)}=p^{-1}\partial_{z}p,\\ b_{\left(F_{0},G_{0}\right)}=p^{-1}\partial_{\overline{z}}p.
\end{array}
\label{pre:07}
\end{eqnarray} 
\end{remark}
\begin{definition}
Let $\left(F_{0},G_{0}\right)$ and $\left(F_{1},G_{1}\right)$ be two generating pairs of the form (\ref{pre:07}), and let their characteristic coefficients satisfy the relation
\begin{equation}
B_{\left(F_{1},G_{1}\right)}=-b_{\left(F_{0},G_{0}\right)}.
\label{pre:08}
\end{equation} 
Thus, the pair $\left(F_{1},G_{1}\right)$ will be called a successor of the pair $\left(F_{0},G_{0}\right)$, whereas $\left(F_{0},G_{0}\right)$ will be called a predecessor of $\left(F_{1},G_{1}\right)$.
\end{definition}
\begin{definition}
Let 
\begin{equation}
\left\lbrace \left(F_{m},G_{m}\right) \right\rbrace,\ m=0,\pm 1,\pm 2,...
\nonumber
\end{equation}
be a set of generating pairs, where every $\left(F_{m+1},G_{m+1}\right)$ is a successor of $\left(F_{m},G_{m}\right)$. Therefore, the set $\left\lbrace \left(F_{m},G_{m}\right)\right\rbrace$ will be called a generating sequence. Moreover, if there exist a number $c$ such that $\left(F_{m},G_{m}\right)=\left(F_{m+c},G_{m+c}\right)$ the generating sequence will be periodic, with period $c$.

Finally, if $\left(F,G\right)=\left(F_{0},G_{0}\right)$, we will say that the generating pair $\left(F,G\right)$ is embedded into the generating sequence $\left\lbrace \left(F_{m},G_{m}\right) \right\rbrace$.
\end{definition}

\begin{theorem}
\label{th:01}
Let $\left(F,G\right)$ be a generating pair of the form (\ref{pre:06}), and let $p$ be a separable-variables function:
\begin{equation}
p=p_{1}(x)p_{2}(y),
\nonumber
\end{equation}
where $x,y\in\mathbb{R}$. Thus $\left(F,G\right)$ will be embedded into a periodic generating sequence, with period $c=2$, such that
\[
F_{m}=\frac{p_{2}(y)}{p_{1}(x)},\ \ G_{m}=i\frac{p_{1}(x)}{p_{2}(y)};
\]
when the subindex $m$ is an even number, and  
\[
F_{m}=p_{1}(x)p_{2}(y),\ \ G_{m}=\frac{i}{p_{1}(x)p_{2}(y)};
\]
when $m$ is odd. 

Furthermore, if particularly $p_{1}(x)=1$, it is easy to verify that the generating sequence in which $\left(F,G\right)$ is embedded will be periodic, but with period $c=1$.
\end{theorem}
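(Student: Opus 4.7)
The plan is to construct the generating sequence explicitly by exploiting the special form (\ref{pre:06}), using the successor condition (\ref{pre:08}) together with the formulas (\ref{pre:07}) for the characteristic coefficients, and then verifying that after two steps the original pair is recovered.

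First I would start from $(F_0,G_0)=(p_1 p_2,\,i/(p_1 p_2))$ and apply (\ref{pre:07}) to obtain $b_{(F_0,G_0)} = p^{-1}\partial_{\bar z} p = p_1'(x)/p_1(x) + i\,p_2'(y)/p_2(y)$, which is already a sum of a pure $x$-function and $i$ times a pure $y$-function. Postulating a successor pair of the form (\ref{pre:06}), say $(F_1,G_1)=(q,\,i/q)$, the relation (\ref{pre:08}) combined with $B_{(F_1,G_1)}=q^{-1}\partial_z q$ becomes a single equation for $q$, namely $q^{-1}\partial_z q = -p_1'/p_1 - i\,p_2'/p_2$. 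Seeking a separable solution $q = q_1(x)q_2(y)$ and using $\partial_z = \partial_x - i\partial_y$ splits this into two ODEs, $q_1'/q_1 = -p_1'/p_1$ and $q_2'/q_2 = p_2'/p_2$, whose solutions give, up to an irrelevant nonzero multiplicative constant, $q = p_2(y)/p_1(x)$. Hence $F_1 = p_2/p_1$ and $G_1 = i\,p_1/p_2$, which matches the stated ``quotient" form.

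Next I would iterate one step: compute $b_{(F_1,G_1)} = F_1^{-1}\partial_{\bar z} F_1 = -p_1'/p_1 + i\,p_2'/p_2$, then solve the analogous equation $r^{-1}\partial_z r = p_1'/p_1 - i\,p_2'/p_2$ for the next generator $r$; the same separable ansatz returns $r = p_1(x)p_2(y)$, so $(F_2,G_2) = (F_0,G_0)$, establishing periodicity with $c=2$. The extension to negative indices is structurally identical: the predecessor relation amounts to the same PDE with the roles of $\partial_z$ and $\partial_{\bar z}$ swapped, and is solved by the same alternating pattern. For the degenerate case $p_1 \equiv 1$, both alternating forms $p_2/p_1$ and $p_1 p_2$ collapse to $p_2(y)$, so the sequence is constant, i.e., periodic with $c=1$.

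I do not expect a serious obstacle: every step is a direct computation with $\partial_z$ and $\partial_{\bar z}$ acting on separable functions, and the successor PDE admits a separable solution precisely because its right-hand side is already split into an $x$-part and an $i$ times $y$-part. The only mildly subtle point worth a sentence in the write-up is the observation that any other solution of the same scalar PDE for $\log q$ differs from the separable one by a constant, which corresponds to an overall nonzero multiplicative constant in $q$ and therefore does not change the structure required by (\ref{pre:06}); this justifies singling out the representatives written in the theorem.
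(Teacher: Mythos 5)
Your computation is correct, but note that the paper itself gives no proof of this theorem: it is stated as an adaptation of results from \cite{bers} and \cite{kpa}, so there is no internal argument to compare against, and your direct verification is essentially the standard one from that literature. Your route — read off $b_{(F_0,G_0)}=p_1'/p_1+i\,p_2'/p_2$ from (\ref{pre:07}), impose the successor condition (\ref{pre:08}) on a candidate pair of the form (\ref{pre:06}), split into real and imaginary parts, and iterate once to return to the original pair — is sound, and restricting the successor to the form $(q,i/q)$ loses no generality because the paper's definition of successor is itself confined to pairs of that form. Your closing uniqueness remark can even be sharpened: since $q$ is real and non-vanishing, the real and imaginary parts of $q^{-1}\partial_z q$ prescribe $\partial_x\log q$ and $\partial_y\log q$ separately, so the separable ansatz is not an assumption at all — $q=c\,p_2/p_1$ is forced. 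The handling of negative indices and of the degenerate case $p_1\equiv 1$ (where the successor equation is satisfied by $p_2$ itself, giving period $c=1$) is also fine.

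One point you should flag rather than paper over: the parity in the theorem as printed is inconsistent with the paper's own definition of ``embedded.'' Embedding means $(F,G)=(F_0,G_0)$, so $F_0=p_1p_2$ sits at the even index $0$, and your computation correctly yields the product form at even $m$ and the quotient form $p_2/p_1$ at odd $m$ — the opposite of the labelling in the statement. Your sentence saying that $F_1=p_2/p_1$ ``matches the stated form'' glosses over this; the mismatch originates in the theorem's wording (which tacitly uses the convention of (\ref{eie:00}), where $p$ is already a quotient), not in your derivation, but a careful write-up should say so explicitly.
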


L. Bers also introduced the concept of the $\left(F_{0},G_{0}\right)$-integral of a complex-valued function $W$. We refer the reader to the specialized literature \cite{bers} and \cite{kpa} for a detailed description of the necessary conditions for its existence. In the current pages, every complex function contained into an $\left(F_{0},G_{0}\right)$-integral will be, by definition, integrable.

\begin{definition}
\label{definition_adjoin}
Let $\left(F_{0},G_{0}\right)$ be a generating pair of the form (\ref{pre:06}). Its adjoin generating pair $\left(F_{0}^{*},G_{0}^{*}\right)$ is defined according to the formulas
\begin{equation}
F_{0}^{*}=-iF_{0},\ \ G_{0}^{*}=-iG_{0}.
\nonumber
\end{equation}
\end{definition}
\begin{definition}
\label{BersIntegral}
The $\left(F_{0},G_{0}\right)$-integral of a complex-valued function $W$ (when it exists \cite{bers}) is defined as:
\begin{equation}
\int_{\eta}Wd_{\left(F_{0},G_{0}\right)}z=F_{0}\mbox{Re}\int_{\eta}G_{0}^{*}Wdz+G_{0}\mbox{Re}\int_{\eta}F_{0}^{*}Wdz,
\nonumber
\end{equation}
where $\eta$ is a rectifiable curve within a domain in the complex plane. Specifically, if we consider the $\left(F_{0},G_{0}\right)$-integral of the $\left(F_{0},G_{0}\right)$-derivative of $W$, we  will have that:
\begin{equation}
\int_{z_{0}}^{z}\partial_{\left(F_{0},G_{0}\right)}Wd_{\left(F_{0},G_{0}\right)}z=-\phi (z_{0})F(z)-\psi (z_{0})G(z)+W(z),
\label{pre:09}
\end{equation}
where $z=x+iy$, and $z_{0}$ is a fixed point in the complex plane. According to the Theorem \ref{th:00}, the $\left(F_{0},G_{0}\right)$-derivatives of $F_{0}$ and $G_{0}$ vanish identically, thus the expression (\ref{pre:09}) can be considered the $\left(F_{0},G_{0}\right)$-antiderivative of $\partial_{\left(F_{0},G_{0}\right)}W$.
\end{definition}

\subsection{Formal Powers.}

\begin{definition}
\label{def:00}
The formal power $Z_{m}^{(0)}\left(a_{0},z_{0};z\right)$ belonging to the generating pair $\left(F_{m},G_{m}\right)$, with formal degree $0$, complex constant coefficient $a_{0}$, center at $z_{0}$, and depending upon $z=x+iy$, is defined according to the expression:
\begin{equation}
Z_{m}^{(0)}\left(a_{0},z_{0};z\right)=\lambda F_{m}(z)+\mu G_{m}(z),
\label{def:00a}
\end{equation}
where $\lambda$ and $\mu$ are complex constants fulfilling the condition:
\begin{equation}
\lambda F_{m}(z_{0})+\mu G_{m}(z_{0})=a_{0}.
\nonumber
\end{equation}

The formal powers with higher degrees are approached according to the recursive formulas:
\begin{equation}
Z_{m}^{(n)}\left(a_{n},z_{0};z\right)=n\int_{z_{0}}^{z}Z_{m-1}^{(n-1)}\left(a_{n},z_{0};z\right)d_{\left(F_{m},G_{m}\right)}z,
\label{pre:10}
\end{equation}
where $n=1,2,3,...$. Notice the integral operators in the right-hand side of the last expression are $\left(F_{m},G_{m}\right)$-antiderivatives.
\end{definition}

\begin{theorem}
\label{propertiesFormalPowers}
The formal powers posses the following properties:
\begin{enumerate}
\item Every $Z_{m}^{(n)}\left(a_{n},z_{0};z\right),\ n=0,1,2,...$ is an $\left(F_{m},G_{m}\right)$-pseudoanalytic function.
\item Let $a_{n}=a'_{n}+ia''_{n}$, where $a'_{n},a''_{n}\in\mathbb{R}$. The following relation holds
\begin{equation}
Z_{m}^{(n)}\left(a_{n},z_{0};z\right)=a'_{n}Z_{m}^{(n)}\left(1,z_{0};\right)+a''_{n}Z_{m}^{(n)}\left(i,z_{0};z\right).
\end{equation}
\item Finally
\begin{equation}
\lim_{z\rightarrow z_{0}}Z_{m}^{(n)}\left(a_{n},z_{0};z\right)=a_{n}(z-z_{0})^{n}.
\end{equation}
\end{enumerate}
\end{theorem}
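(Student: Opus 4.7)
The plan is to prove all three properties simultaneously by induction on the degree $n$, invoking earlier results in the paper at each stage.

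For the base case $n=0$, Property 1 follows at once from Theorem \ref{th:00}: since $\lambda$ and $\mu$ must be chosen real in order that $\lambda F_m + \mu G_m$ be pseudoanalytic in the Bers sense (the normalization $\lambda F_m(z_0)+\mu G_m(z_0)=a_0$ always admits a unique real solution because $\mathrm{Im}(\overline{F_m}G_m)>0$ makes $F_m(z_0),G_m(z_0)$ an $\mathbb{R}$-basis of $\mathbb{C}$), the $(F_m,G_m)$-derivative of $Z_m^{(0)}$ vanishes by linearity. Property 2 at $n=0$ is equivalent to the $\mathbb{R}$-linearity of the map $a_0\mapsto(\lambda,\mu)$, which is immediate from Cramer's rule. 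Property 3 at $n=0$ reduces to the defining normalization $Z_m^{(0)}(a_0,z_0;z_0)=a_0$, which coincides with $a_0(z-z_0)^0$.

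For the inductive step, assume all three properties hold for $Z_{m-1}^{(n-1)}$. Property 1 for $Z_m^{(n)}$ then follows from the standard Bers fact that the $(F_m,G_m)$-antiderivative of an $(F_{m-1},G_{m-1})$-pseudoanalytic function is again $(F_m,G_m)$-pseudoanalytic; this is ensured precisely by the successor relation (\ref{pre:08}), which is what makes the line-integral representation in Definition \ref{BersIntegral} path-independent and what forces the resulting $\phi F_m+\psi G_m$ representation to satisfy the Vekua equation (\ref{pre:05}). Property 2 propagates automatically because the Bers integral of Definition \ref{BersIntegral} is built from real parts of ordinary line integrals and is therefore $\mathbb{R}$-linear in its integrand, so splitting $Z_{m-1}^{(n-1)}(a_n,z_0;z)=a'_n Z_{m-1}^{(n-1)}(1,z_0;z)+a''_n Z_{m-1}^{(n-1)}(i,z_0;z)$ under the integral sign yields the analogous decomposition for $Z_m^{(n)}$. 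Property 3 follows from the observation that near $z=z_0$ the generating pair is approximately constant, so the Bers integral formula collapses to the ordinary complex antiderivative up to higher-order corrections; combining this with the inductive hypothesis that $Z_{m-1}^{(n-1)}(a_n,z_0;z)$ behaves like $a_n(z-z_0)^{n-1}$ near $z_0$, together with the elementary identity $n\int_{z_0}^z a_n(t-z_0)^{n-1}\,dt=a_n(z-z_0)^n$, gives the claim.

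The main obstacle is the inductive step of Property 1: one must verify that integrating an $(F_{m-1},G_{m-1})$-pseudoanalytic function against $d_{(F_m,G_m)}z$ actually yields an $(F_m,G_m)$-pseudoanalytic function. The Bers integral $\int W\,d_{(F_m,G_m)}z$ always has the \emph{form} $\phi F_m+\psi G_m$ with real $\phi,\psi$, but one must check both path-independence and that $\phi,\psi$ satisfy the compatibility condition (\ref{pre:02}); both facts hinge critically on the successor identity $B_{(F_m,G_m)}=-b_{(F_{m-1},G_{m-1})}$ that links consecutive members of the generating sequence. Once this link is exploited, Properties 2 and 3 are essentially bookkeeping built on $\mathbb{R}$-linearity of the integral and leading-order Taylor expansions at $z_0$, respectively.
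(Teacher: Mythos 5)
The paper never proves this theorem: it is presented as material adapted from \cite{bers} and \cite{kpa}, so there is no in-paper argument to compare yours against, and the honest benchmark is the classical Bers theory. Your inductive sketch follows exactly that classical route and its outline is sound. In particular, you are right to repair the slip in Definition \ref{def:00} by taking $\lambda,\mu$ real (complex constants would in general destroy pseudoanalyticity, since the Vekua equation is only $\mathbb{R}$-linear), and the observation that $\mathrm{Im}\left(\overline{F}G\right)>0$ makes $F_m(z_0),G_m(z_0)$ an $\mathbb{R}$-basis of $\mathbb{C}$ correctly delivers existence, uniqueness, and the $\mathbb{R}$-linearity needed for Property 2; the propagation of Properties 2 and 3 through the recursion is likewise the right bookkeeping.

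Two cautions. First, the load-bearing ingredients of your induction are precisely the Bers theorems the paper is implicitly citing, and your proposal asserts rather than proves them: (i) that the $\left(F_m,G_m\right)$-integral of a function pseudoanalytic with respect to the adjacent pair is path-independent and produces an $\left(F_m,G_m\right)$-pseudoanalytic function, and (ii) the asymptotics behind Property 3, where ``the generating pair is approximately constant near $z_0$'' must be converted into uniform error estimates before the limit can be taken under the integral sign (this is the nontrivial part of Bers's proof). Citing \cite{bers} or \cite{kpa} for these is legitimate---it is what the paper does---but as written they remain gaps if the argument is meant to be self-contained. Second, Bers's antiderivative theorem concerns integrands pseudoanalytic with respect to the \emph{successor} $\left(F_{m+1},G_{m+1}\right)$, whereas the recursion (\ref{pre:10}) (and your step) uses $\left(F_{m-1},G_{m-1}\right)$; this is harmless here only because the generating sequences in play are periodic with period at most two, so the two pairs coincide, and your appeal to the successor relation (\ref{pre:08}) should be phrased accordingly. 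Finally, note that Property 3 as literally printed is trivial for $n\geq 1$ (both sides tend to zero); what your argument actually establishes, correctly, is the intended statement $Z_{m}^{(n)}\left(a_{n},z_{0};z\right)/(z-z_{0})^{n}\rightarrow a_{n}$ as $z\rightarrow z_{0}$.
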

\begin{theorem}
Let $W$ be an $\left(F_{m},G_{m}\right)$-pseudoanalytic function. Then it can be expressed in terms of the so-called Taylor series in formal powers:
\begin{equation}
W=\sum_{n=0}^{\infty}Z_{m}^{(n)}\left(a_{n},z_{0};z\right).
\label{pre:11}
\end{equation}
Furthermore, since any $\left(F_{m},G_{m}\right)$-pseudoanalytic function $W$ accepts this expansion, (\ref{pre:11}) is an analytical representation of the general solution for the Vekua equation (\ref{pre:07}).  
\end{theorem}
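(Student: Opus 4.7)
The plan is to mimic the construction of holomorphic Taylor series, replacing the ordinary complex derivative by the $(F_m,G_m)$-derivative and ordinary monomials by the formal powers $Z_m^{(n)}$. First I would establish the differentiation rule
\[
\partial_{(F_m,G_m)}Z_m^{(n)}(a_n,z_0;z)=nZ_{m+1}^{(n-1)}(a_n,z_0;z),
\]
which follows from applying the $(F_m,G_m)$-derivative to the recursive integral (\ref{pre:10}) and invoking Definition \ref{BersIntegral}, by which the Bers integral is the antiderivative of the pseudoanalytic derivative. A consequence is that iterated pseudoanalytic derivatives of any $(F_m,G_m)$-pseudoanalytic function walk along the generating sequence $\{(F_{m+k},G_{m+k})\}$, and that $\partial_{(F_m,G_m)}^{\,n}Z_m^{(n)}(a_n,z_0;z)$ evaluated at $z_0$ equals $n!\,a_n$.

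Next I would define the candidate Taylor coefficients by $a_n=\tfrac{1}{n!}W^{[n]}(z_0)$, where $W^{[0]}=W$ and $W^{[k+1]}=\partial_{(F_{m+k},G_{m+k})}W^{[k]}$; each $W^{[k]}$ is well defined by the walking-along-the-sequence remark and is pseudoanalytic in the corresponding pair. By part 3 of Theorem \ref{propertiesFormalPowers} the partial sum $S_N(z)=\sum_{n=0}^{N}Z_{m}^{(n)}(a_n,z_0;z)$ agrees with $W$ to order $N$ at $z_0$, so the remainder $R_N=W-S_N$ is $(F_m,G_m)$-pseudoanalytic and vanishes at $z_0$ together with its first $N$ pseudoanalytic derivatives.

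The main obstacle is proving convergence. I would adapt the classical Bers strategy: obtain uniform growth bounds of the form $|Z_m^{(n)}(a_n,z_0;z)|\le C|a_n|\rho^n$ on closed discs $|z-z_0|\le\rho$ using that the generating sequence of Theorem \ref{th:01} is periodic with period at most $2$, so only finitely many distinct pairs participate; and bound $|a_n|$ by a pseudoanalytic Cauchy-type estimate derived from the Bers integral of Definition \ref{BersIntegral} on a slightly larger circle. A Weierstrass majorant then yields uniform absolute convergence of $\sum_n Z_m^{(n)}(a_n,z_0;z)$ on a neighbourhood of $z_0$. To identify the limit with $W$, I would apply a Carleman-type similarity principle to $R_N$ as $N\to\infty$: a pseudoanalytic function vanishing at $z_0$ to all orders must be identically zero, giving $W=\sum_{n=0}^\infty Z_m^{(n)}(a_n,z_0;z)$. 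The final assertion, that (\ref{pre:11}) represents the general solution of the Vekua equation, is then immediate, since solutions of (\ref{pre:05}) are by definition $(F_0,G_0)$-pseudoanalytic functions and every such function admits the expansion just constructed.
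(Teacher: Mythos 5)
The first thing to note is that the paper does not prove this theorem at all: it is imported, with adapted notation, from Bers \cite{bers} and Kravchenko \cite{kpa} (the paper says explicitly that these statements ``were originally presented'' there). So there is no internal proof to compare your argument with; it can only be judged against the classical theory it is quoting.

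Judged that way, your outline has the right skeleton (the differentiation rule $\partial_{(F_m,G_m)}Z_m^{(n)}=nZ_{m+1}^{(n-1)}$, coefficients from iterated $(F_{m+k},G_{m+k})$-derivatives, agreement with $W$ to order $N$ at $z_0$, a similarity-principle argument to kill a pseudoanalytic function vanishing to infinite order), but the convergence step is a genuine gap rather than a routine detail. The uniform bound $\vert Z_m^{(n)}(a_n,z_0;z)\vert\le C\vert a_n\vert\rho^{n}$ with $C$ independent of $n$, and the Cauchy-type estimate on the $a_n$, are precisely the hard analytic core of Bers's expansion theory; they do not follow from the remark that ``only finitely many distinct pairs participate.'' Periodicity of the generating sequence is guaranteed by Theorem \ref{th:01} only for the special separable-variables pairs of the form (\ref{pre:06}), whereas the statement concerns an arbitrary $(F_m,G_m)$; for a general generating pair the Taylor series in formal powers is known only as an asymptotic expansion, and convergence to $W$ in a neighbourhood requires additional hypotheses on the generating sequence (uniform boundedness/H\"older-type conditions on the characteristic coefficients of all its pairs), which is exactly why the versions in \cite{bers} and \cite{kpa} carry such conditions. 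A proof claiming the expansion for every generating pair must therefore break down, and it breaks at the point where you invoke the Weierstrass majorant. Two smaller points: your differentiation rule uses the successor index $m+1$, while the paper's recursion (\ref{pre:10}) is written with $m-1$ — harmless when the period is $1$ or $2$, but it needs to be reconciled with the definition of successor; and in the final step you need a tail estimate showing that $W-\sum_{n=0}^{\infty}Z_m^{(n)}(a_n,z_0;z)$ vanishes to infinite order at $z_0$ — knowing that each partial remainder $R_N$ vanishes to order $N$ is not by itself enough once the infinite tail is added back in.
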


\subsection{The two-dimensional Electrical Impedance Equation.}

Let us consider the equation (\ref{int:00}) in the plane: 
\begin{equation}
\nabla\cdot\left(\sigma\nabla u\right)=0.
\nonumber
\end{equation}
As it has been shown in several previous works (see \emph{e.g.} \cite{kpa} and \cite{oct}), if $\sigma$ can be expressed by means of a separable-variables function:
\begin{equation}
\sigma(x,y)=\sigma_{1}(x)\sigma_{2}(y),
\nonumber
\end{equation}
by introducing the notations
\begin{equation}
\begin{array}{c}
W=\sqrt{\sigma}\partial_{x}u-i\sqrt{\sigma}\partial_{y}u,\\
p=\left(\sqrt{\sigma_{1}}\right)^{-1}\sqrt{\sigma_{2}};
\end{array}
\label{eie:00}
\end{equation}
the equation (\ref{int:00}) will turn into the Vekua equation
\begin{equation}
\partial_{\overline{z}}W-\frac{\partial_{\overline{z}}p}{p}\overline{W}=0,
\label{eie:01}
\end{equation}
for which the functions
\begin{equation}
F_{0}=p,\ \ G_{0}=\frac{i}{p},
\label{eie:02}
\end{equation}
conform a generating pair. 

From (\ref{eie:00}), and according to the Theorem \ref{th:01}, it is possible to verify that this pair is embedded into a generating sequence, with period $c=2$, for $p$ is a separable-variables function.

\subsection{A complete set for the Dirichlet boundary value problem of the two-dimensional Electrical Impedance Equation.}

An explicit generating sequence allows the construction of the formal powers (\ref{pre:10}), and in consequence, the approaching of the general solution of (\ref{int:00}), according to the relations (\ref{eie:00}).

Indeed, a very important relation between the solutions of (\ref{int:00}) and of (\ref{eie:01}) was elegantly posed in \cite{cck}, and this relation will play a central role in the present work.

\begin{theorem}
\label{TheoremComplete}
\cite{cck} Let us consider the set of formal powers
\begin{equation}
\left\lbrace Z_{0}^{(n)}\left(1,0;z\right),\ Z_{0}^{(n)}\left(i,0;z\right) \right\rbrace_{n=0}^{\infty},
\nonumber
\end{equation}
corresponding to the generating pair (\ref{eie:02}), and let $\Omega\left(\mathbb{R}^2\right)$ be a bounded domain, with boundary $\Gamma$, such that $0\in\Omega$ but $0\notin\Gamma$. Then the set of functions defined on $\Gamma$:
\begin{equation}
\left\lbrace \mbox{Re}Z_{0}^{(n)}\left(1,0;z\right)\vert_{\Gamma},\ \mbox{Re}Z_{0}^{(n)}\left(i,0;z\right)\vert_{\Gamma} \right\rbrace_{n=0}^{\infty},
\label{eie:03}
\end{equation}
conforms a complete system for approaching solutions of the forward Dirichlet boundary value problem of (\ref{int:00}).
\end{theorem}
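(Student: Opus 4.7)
The plan is to exploit the correspondence between solutions of the EIE (\ref{int:00}) and $(F_0,G_0)$-pseudoanalytic functions established through (\ref{eie:00})--(\ref{eie:02}), together with the Taylor-series-in-formal-powers expansion (\ref{pre:11}), in order to reduce the boundary claim to a density statement for pseudoanalytic functions. In outline: every Dirichlet solution $u$ produces a pseudoanalytic $W$ via (\ref{eie:00}); that $W$ is representable as an infinite real-linear combination of $Z_0^{(n)}(1,0;z)$ and $Z_0^{(n)}(i,0;z)$; and the passage from $W$ back to $u$ on $\Gamma$ lifts this to a representation of $u_{\mathbf{c}}|_{\Gamma}$ in the claimed basis.

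First I would fix an arbitrary Dirichlet solution $u$ of (\ref{int:00}) with boundary datum $u_{\mathbf{c}}|_{\Gamma}$ and form $W=\sqrt{\sigma}\partial_z u$, which satisfies the Vekua equation (\ref{eie:01}). Because $0\in\Omega$ and $0\notin\Gamma$, the formal-power expansion centred at the origin applies throughout $\Omega$, giving $W=\sum_{n=0}^{\infty}Z_0^{(n)}(a_n,0;z)$. Using part (2) of Theorem~\ref{propertiesFormalPowers} I would split each $a_n=a'_n+ia''_n$ and recast the series as a real-linear combination of $Z_0^{(n)}(1,0;z)$ and $Z_0^{(n)}(i,0;z)$.

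Next, $u$ is reconstructed from $W$ by the line integral $u(z)=u(0)+\mbox{Re}\int_0^z (W/\sqrt{\sigma})\,dz$, whose path independence is precisely the real part of (\ref{eie:01}), namely (\ref{int:00}) itself. The Bers-integral construction (\ref{pre:10}) of the formal powers, together with the adjoin structure of Definition~\ref{definition_adjoin} and the antiderivative identity (\ref{pre:09}), should identify this line integral term by term with the real parts of the formal powers, producing the claimed expansion of $u_{\mathbf{c}}|_{\Gamma}$ on $\Gamma$. Uniqueness of the Dirichlet problem for (\ref{int:00}), valid under mild regularity and non-degeneracy assumptions on $\sigma$, then converts this boundary density statement into a genuine approximation scheme for the interior solution.

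The hard part will be the matching step: the formal powers are built from $(F_m,G_m)$-integrals rather than ordinary line integrals, and it is not immediately clear that the ordinary integration of $W/\sqrt{\sigma}$ used to recover $u$ reproduces exactly $\mbox{Re}\,Z_0^{(n)}(1,0;z)$ and $\mbox{Re}\,Z_0^{(n)}(i,0;z)$ rather than the boundary traces of their Bers antiderivatives. Reconciling the two kinds of integration requires careful use of Definitions~\ref{definition_adjoin} and~\ref{BersIntegral} and of the alternating structure of the generating sequence from Theorem~\ref{th:01}. A secondary technical point, which the theorem statement leaves implicit, is the specification of the function space on $\Gamma$ in which completeness is claimed; this depends on the regularity of $\sigma$ and of $\Gamma$, and must be chosen so that the resulting series converges in a norm under which (\ref{int:00}) is well-posed.
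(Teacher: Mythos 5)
The paper itself does not prove Theorem \ref{TheoremComplete}: it is imported verbatim from \cite{cck}, so your proposal has to stand against the argument given there (a Runge-type approximation theorem for pseudoanalytic functions combined with Bergman-type reproducing kernels and a maximum-principle transfer to the boundary). Measured that way, the central step of your plan fails as stated. You expand an arbitrary solution $W$ of the Vekua equation in the Taylor series in formal powers (\ref{pre:11}) and then evaluate (after integration) on $\Gamma$. But Bers' expansion theorem does not give a series converging globally on $\Omega$ up to and including the boundary; convergence is local (normal inside the domain), and the paper's own wording of (\ref{pre:11}) hides this. Completeness of the boundary traces in \cite{cck} is precisely what replaces the missing global expansion: one approximates a given solution uniformly on compact subsets by finite linear combinations of formal powers (Runge-type theorem) and then uses solvability of the Dirichlet problem together with the maximum principle to convert interior approximation into approximation of the boundary datum in a specified norm ($C(\Gamma)$ or $L_{2}(\Gamma)$). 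Your appeal to ``uniqueness of the Dirichlet problem'' does not perform this transfer, and without it your scheme only covers the special solutions whose expansions happen to converge up to $\Gamma$, which is strictly weaker than the claimed completeness.

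The ``hard part'' you flag is also a genuine gap rather than a technicality, because the correspondence you chose points the argument in the wrong direction. With $W=\sqrt{\sigma}\,\partial_{x}u-i\sqrt{\sigma}\,\partial_{y}u$ you are expanding a gradient-type object and then must show that ordinary line integration of $W/\sqrt{\sigma}$, applied term by term, lands exactly on the functions $\mbox{Re}\,Z_{0}^{(n)}\left(1,0;z\right)\vert_{\Gamma}$ and $\mbox{Re}\,Z_{0}^{(n)}\left(i,0;z\right)\vert_{\Gamma}$ of the statement; this identification of the Bers $\left(F,G\right)$-antiderivative with an ordinary antiderivative is not available in general, and you do not supply it. The route actually used in \cite{cck} (and in \cite{kpa}) avoids this entirely: for the main Vekua equation with generating pair $\left(\sqrt{\sigma},\,i/\sqrt{\sigma}\right)$, the real parts of the formal powers are, after division by $\sqrt{\sigma}$, themselves exact solutions of (\ref{int:00}), so the family in (\ref{eie:03}) consists directly of traces of $\sigma$-harmonic functions and no reconstruction of $u$ from its gradient is needed. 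To repair your proof you would need either to prove the matching identity you postpone, or to switch to the main-Vekua correspondence and then supply the Runge-type density and maximum-principle steps; as written, neither is done, and the function space on $\Gamma$ in which completeness is asserted must be fixed as part of the statement, not left as a remark.
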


In other words, according to the second property of the Theorem \ref{propertiesFormalPowers}, when a separable-variables conductivity function $\sigma$ is given within a bounded domain $\Omega$, and a boundary condition $u_{\textbf{c}}\vert_{\Gamma}$ is imposed for the solution of (\ref{int:00}), it will be always possible to construct a finite set of functions, subset of (\ref{eie:03}), such that
\begin{equation}
\oint\left( u_{\textbf{c}}\vert_{\Gamma}-\sum_{n=0}^{N}a_{n}'\mbox{Re}Z_{0}^{(n)}\left(1,0;z\right)\vert_{\Gamma}+a_{n}''\ \mbox{Re}Z_{0}^{(n)}\left(i,0;z\right)\vert_{\Gamma}\right)^{2}dl<\varepsilon,
\label{eie:04}
\end{equation}
where $\varepsilon>0$ and $l\in\Gamma$.

\subsection{Construction of a piecewise separable-variables function.}
\label{piecewise}

Approaching solutions of the forward Dirichlet boundary value problem for (\ref{int:00}), by employing formal powers, has proven its effectiveness in a variety of works (see \emph{e.g.} \cite{cck} and \cite{ckr}). Yet, how to apply those methodologies when the conductivity $\sigma$ is not represented as a separable-variables function, remains an open question. 

A possibility for studying these cases could be to introduce an interpolating method that, given a set of conductivity values defined into a bounded domain in the plane, it can be able to approach a piecewise separable-variables function. One of the first proposals in this direction was posed in \cite{oct}, and the next paragraphs will show that, even it is a basic idea, it can well serve to our main objectives.

Consider a bounded domain $\Omega$ (a unitary disk, for instance), and divide it into a finite number of subsections, taking care that the center $z_{0}$ of the formal powers (see the Definition \ref{def:00}) does not reside onto the boundary of two or more subsections. On behalf of simplicity, let us make the division by employing a finite set of parallel lines to the $y$-axis, equidistantly located one to each other, and let us fix $z_{0}=0$.

A simplified illustration of this steps is plotted in Figure \ref{fig:eie00a}. For this example, we consider $K=3$ subsections, delimited by the set of $K+1$ $y$-axis parallel lines 
\begin{equation}
\left\lbrace X_{(0)},X_{(1)},X_{(2)},X_{(3)}\right\rbrace.
\nonumber
\end{equation}

\begin{figure}
\centering
\subfigure[A circular sectioned domain.]{
\includegraphics[scale=0.350]{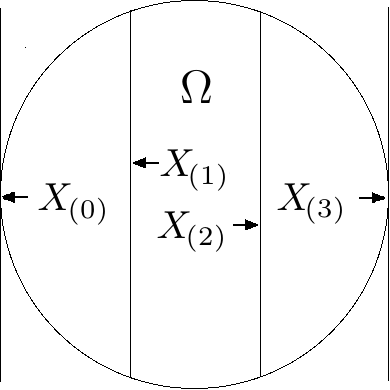}
\label{fig:eie00a}
}
\subfigure[Points on the lines $\varphi_{k}$, on which the interpolating functions $f_{(k)}(y)$ will be constructed.]{
\includegraphics[scale=0.350]{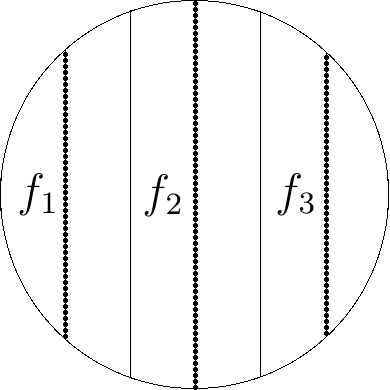}
\label{fig:eie00b}
}
\caption{Simplified illustrations of the steps for approaching a piecewise separable-variables conductivity function.}
\end{figure}

Supposing that the values of the electrical conductivity are defined at every point inside the domain $\Omega$, let us trace a straight line within every subsection, watching out that it does not intersect the $y$-parallel lines $X_{(k)}$ and $X_{(k+1)}$, that delimit its corresponding subsection. Indeed, such lines can simply be parallel to $X_{(k)}$ and $X_{(k+1)}$. We will introduce them as:
\begin{equation}
\varphi_{(k)}=\frac{x_{k+1}-x_{k}}{2},
\label{chi:inter}
\end{equation}
where $x_{k}$ is the common $x$-coordinate of all points belonging to $X_{(k)}$. In the same sense, by $\chi_{k}$ we will represent the common $x$-coordinate of the lines $\varphi_{(k)}$.

The next step is to collect a finite set of values $J$ over each line $\varphi_{(k)}$. For every crossing line, the quantity $J$ of collected values must be large enough to warrant that the interpolating functions $\left\lbrace f_{(k)}\right\rbrace$ (in our example interpolated by piecewise-defined polynomials of degree $1$) will adequately approach the remaining conductivity values defined over the line $\varphi_{(k)}$. This step of the procedure is illustrated in Figure \ref{fig:eie00b}. 

Let us assume that the conductivity inside every subsection is represented according to the expression:
\begin{equation}
\frac{x+A_{(k)}}{\chi_{k}+A_{(k)}}\cdot f_{(k)}(y),
\label{abv00}
\end{equation}
where $\chi_{k}$ denotes the common $x$-coordinate of the function $\varphi_{(k)}$, introduced in (\ref{chi:inter}), $f_{(k)}(y)$ is the interpolating function approached with the $J$ values of the conductivity, recollected over the $\varphi_{(k)}$ line, and $A_{(k)}$ is a positive real constant such that $x+A_{(k)}\neq 0$ within every subsection.

From this point of view, and supposing that we have $K$ subsections, the conductivity $\sigma$ inside the bounded domain $\Omega$, can be approached by means of the piecewise function:

\begin{equation}
   \sigma_{\textbf{pw}} (x,y) = \left\lbrace
     \begin{array}{l c l}
       \frac{x+A_{(1)}}{\chi_{1}+A_{(1)}}\cdot f_{(1)}(y) &:& x \in \left[x_{0},x_{1}\right);\\
       \frac{x+A_{(2)}}{\chi_{2}+A_{(2)}}\cdot f_{(2)}(y) &:& x \in \left[x_{2},x_{3}\right);\\
       \cdots & & \\
       \frac{x+A_{(K)}}{\chi_{K}+A_{(K)}}\cdot f_{(K)}(y) &:& x \in \left[x_{K-1},x_{K}\right].
     \end{array}
   \right.   
\label{abv01}
\end{equation}
Here, as appointed in Figure \ref{fig:eie00a}, $x_{0}$ represents the common $x$-coordinate of the first $y$-axis parallel line $X_{(0)}$ employed for subdividing $\Omega$, whereas $x_{K}$ represents the common coordinate of $X_{(K)}$. It is evident that the piecewise function (\ref{abv01}) is separable-variables.

Then, according to the Theorem \ref{th:01}, and the relations (\ref{eie:00}), it follows that

\begin{equation}
   F_{0} = \left\lbrace
     \begin{array}{l c l}
       \left(\frac{\chi_{1}+A_{1}}{x+A_{(1)}}\cdot f_{(1)}(y)\right)^{\frac{1}{2}} & : & x \in [x_{0},x_{1});\\
       \left(\frac{\chi_{2}+K_{2}}{x+K_{(2)}}\cdot f_{(2)}(y)\right)^{\frac{1}{2}} & : & x \in [x_{1},x_{2});\\
       \cdots & &\\
       \left(\frac{\chi_{K}+A_{(K)}}{x+A_{(K)}}\cdot f_{(K)}(y)\right)^{\frac{1}{2}} & : & x \in [x_{K-1},x_{K}];
     \end{array}
   \right. \nonumber
\end{equation}
whereas
\begin{equation}
   G_{0} = \left\lbrace
     \begin{array}{l c l}
       i\left(\frac{x+A_{(1)}}{\chi_{1}+A_{(1)}}\cdot \frac{1}{f_{(1)}(y)} \right)^{\frac{1}{2}} & : & x \in [x_{0},x_{1});\\
       i\left(\frac{x+A_{(2)}}{\chi_{(2)}+A_{(2)}}\cdot \frac{1}{f_{(2)}(y)}\right)^{\frac{1}{2}} & : & x \in [x_{1},x_{2});\\
       \cdots & \\
       i\left(\frac{x+A_{(K)}}{\chi_{K}+A_{(K)}}\cdot \frac{1}{f_{(K)}(y)}\right)^{\frac{1}{2}} & : & x \in [x_{K-1},x_{K}].
     \end{array}
   \right. \nonumber  
\end{equation}

For the generating pair $(F_{1},G_{1})$ we will simply have that
\begin{equation}
F_{1}=\sqrt{\sigma_{\textbf{pw}}},\ \ \ G_{1}=i\left(\sqrt{\sigma_{\textbf{pw}}}\right)^{-1};
\nonumber
\end{equation}

These are the generating pairs $(F_{0},G_{0})$ and $(F_{1},G_{1})$ that we will employ for the numerical approach the formal powers.

\section{Numerical solutions of the forward Dirichlet boundary value problem.}

In order to evaluate the effectiveness of the piecewise function $\sigma_{\textbf{pw}}$, introduced in (\ref{abv01}), let us approach the solution of the forward Dirichlet boundary value problem corresponding to (\ref{int:00}), at the perimeter of the unit circle, imposing an exact solution as the boundary condition $u_{\textbf{c}}\vert_{\Gamma}$.

\begin{proposition}
\label{prop02}
Let
\begin{equation}
\sigma=\left(\frac{1}{x^2+0.1}\right)\left(\frac{1}{y^2+0.1}\right).
\label{bs09}
\end{equation} 
Then the function
\begin{equation}
u=\frac{x^3+y^3}{3}+0.1\left(x+y\right),
\label{bs10}
\end{equation}
will be a particular solution of (\ref{int:00}).
\end{proposition}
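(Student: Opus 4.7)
The plan is a direct verification. Since the equation $\nabla\cdot(\sigma\nabla u)=0$ is posed componentwise, I would just compute $\sigma\nabla u$ explicitly and observe that its divergence vanishes identically.

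First I would compute the partial derivatives of the candidate solution. Differentiating $u=\tfrac{1}{3}(x^3+y^3)+0.1(x+y)$ gives
\begin{equation}
\partial_x u = x^2+0.1,\qquad \partial_y u = y^2+0.1. \nonumber
\end{equation}
The key observation is that these two factors are precisely the reciprocals of the two one-variable factors making up $\sigma$ in (\ref{bs09}). So when multiplied against $\sigma=\frac{1}{(x^2+0.1)(y^2+0.1)}$, one factor cancels in each component:
\begin{equation}
\sigma\,\partial_x u = \frac{1}{y^2+0.1},\qquad \sigma\,\partial_y u = \frac{1}{x^2+0.1}. \nonumber
\end{equation}

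Next I would take the divergence of the vector field $\sigma\nabla u$. The first component depends only on $y$, so its $x$-derivative vanishes; symmetrically the second component depends only on $x$, so its $y$-derivative vanishes. Thus
\begin{equation}
\nabla\cdot(\sigma\nabla u) = \partial_x\!\left(\tfrac{1}{y^2+0.1}\right) + \partial_y\!\left(\tfrac{1}{x^2+0.1}\right) = 0, \nonumber
\end{equation}
which is exactly (\ref{int:00}). Since $\sigma$ is non-vanishing on all of $\mathbb{R}^2$ (both $x^2+0.1$ and $y^2+0.1$ are bounded away from zero), $u$ is a classical solution on any bounded domain $\Omega$, in particular on the unit disk.

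There is no real obstacle: the proposition was constructed precisely so that the separable structure of $\sigma$ pairs with a gradient of $u$ whose components match $\sigma^{-1}$ factor by factor. The only thing worth emphasizing in the write-up is this cancellation, which makes the vector $\sigma\nabla u$ degenerate into a pair of single-variable functions and trivially kills the divergence.
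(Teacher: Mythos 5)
Your verification is correct, and it is exactly the argument the paper relies on (the proposition is stated as verifiable by direct substitution, just as with the other exact-solution examples): the cancellation $\sigma\,\partial_x u=\frac{1}{y^2+0.1}$, $\sigma\,\partial_y u=\frac{1}{x^2+0.1}$ makes the divergence vanish identically. Nothing is missing.
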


\subsection{Numerical approach of the formal powers.}
\label{highacc}

We will study a numerical method for approaching elements of the set (\ref{eie:03}). A more detailed description of this method, including a variety of special examples, can be found in \cite{bucio}. In the next paragraphs, we will focus our attention in the construction of the subset of formal powers
\begin{equation}
\left\lbrace Z_{0}^{(n)}(1,0;z),Z_{1}^{(n)}(1,0;z)\right\rbrace_{n=0}^{N},\nonumber
\end{equation}
because not any significant alteration is needed when approaching the formal powers with coefficient $a_{n}=i$.

Taking into account that the integral operators introduced in (\ref{pre:10}) are path-independent \cite{bers}\cite{kpa}, let us consider a radius $R$ of the unit circle with center at $z_{0}=0$, as the rectifiable curve $\eta$ described in (\ref{BersIntegral}).

We shall consider $P+1$ points equidistantly distributed on $R$, being the first $r[0]=0$ and the last $r[P]=1$:
\begin{equation}
\left\lbrace r[p]=\frac{p}{P}\right\rbrace_{p=0}^{P}.
\label{bs:10aa}
\end{equation}
Thus we can construct a set of coordinates according to the formulas:
\begin{equation}
\begin{array}{c c c}
x[p]&=&r[p]\cos\theta_{q},\\
y[p]&=&r[p]\sin\theta_{q};
\end{array}
\label{bs:10a}
\end{equation}
where $\theta_{q}$ is the angle corresponding to $R$.

According to (\ref{eie:00}), the data (\ref{bs:10a}) will be used to obtain the sets of values 
\begin{equation}
\begin{array}{c}
F_{0}(z[p])=\left(y[p]^2+0.1\right)^{-\frac{1}{2}}\left(x[p]^2+0.1\right)^{\frac{1}{2}},\\
F_{1}(z[p])=\left(y[p]^2+0.1\right)^{-\frac{1}{2}}\left(x[p]^2+0.1\right)^{-\frac{1}{2}};
\end{array}
\nonumber
\end{equation}
where $z[p]=x[p]+iy[p]$. Their associated functions $G_{0}(z[p])$ and $G_{1}(z[p])$ will be constructed according to (\ref{pre:06}), whereas the adjoin pairs $(F_{0}^{*}(z[p]),G_{0}^{*}(z[p]))$ and $(F_{1}^{*}(z[p]),G_{1}^{*}(z[p]))$ will have the form introduced in the Definition \ref{definition_adjoin}.

From (\ref{def:00a}), it immediately follows that
\begin{equation}
\begin{array}{c c c}
Z_{0}^{(0)}(1,0;z[p])&=&F_{0}(z[p]),\\
Z_{1}^{(0)}(1,0;z[p])&=&F_{1}(z[p]);
\end{array}
\label{Z:00}
\end{equation}
Hereafter, each formal power with $n>0$ will be always approached considering $P+1$ equidistant points within the interval $[0,1]$.

Consider the formal powers $Z^{(n)}_{0}(1,0;z[p])$. To approach their values at the points $z[p]=x[p]+iy[p]$, we shall employ a variation of the trapezoidal integration method:
\begin{equation}
\begin{array}{c}
Z_{0}^{(n)}(1,0;z[p])=\\
\delta F_{0}(z[p])\cdot\mbox{Re}\sum\limits_{s=0}^{p-1}\left(Z_{1}^{(n-1)}(1,0;z[s+1])\cdot G_{0}^{*}\left(z[s+1]\right)\right)dz[s]+\\
+\delta F_{0}(z[p])\mbox{Re}\sum\limits_{s=0}^{p}\left(Z_{1}^{(n-1)}(1,0;z[s])\cdot G_{0}^{*}\left( z[s]\right)\right)
dz[s]+\\
+\delta G_{0}(z[p])\cdot\mbox{Re}\sum\limits_{s=0}^{p-1}\left(Z_{1}^{(n-1)}(1,0;z[s+1])\cdot F_{0}^{*}\left(z[s+1]\right)\right)dz[s]+\\
+\delta G_{0}(z[p])\mbox{Re}\sum\limits_{s=0}^{p}\left(Z_{1}^{(n-1)}(1,0;z[s])\cdot F_{0}^{*}\left( z[s]\right)\right)dz[s];
\end{array}
\label{high}
\end{equation}
where
\begin{equation}
dz[s]=\left(z[s+1]-z[s]\right),
\nonumber
\end{equation}
and $\delta$ is a real constant factor, heuristically selected, that contributes to the numerical stability of the method. In general, every experiment requires an individual set of trials in order to estimate an adequate value of $\delta$. Particularly, for the examples further considered in this work, $\delta=9$ provided de best results.  

It is also important to remark that once we have adopted the expression (\ref{high}) for approaching the formal powers, we implicitly employ a piecewise interpolating polynomial function of degree $1$, to relate every value $Z_{0}^{(n)}(1,0;z[p])$, for $p=0,1,...,P$; and $n=0,1,...,N$.

Notice also that, according to the third property of the Theorem \ref{propertiesFormalPowers}, for $\forall n>0$, we have that
\begin{equation}
Z_{0}^{(n)}(1,0;z[0])\equiv 0.\nonumber
\end{equation}

Iterating the last procedure we can approach $N+1$ formal powers that will conform the set
\begin{equation}
\left\lbrace \mbox{Re}Z_{0}^{(n)}(1,0;z)\right\rbrace_{n=0}^{N};
\nonumber
\end{equation}
and after making minor modifications, we shall also approach the set of $N$ formal powers
\begin{equation}
\left\lbrace \mbox{Re}Z_{0}^{(n)}(i,0;z)\right\rbrace_{n=1}^{N};
\nonumber
\end{equation}
pointing out that, according to the Theorem \ref{pre:06}, $\mbox{Re}Z_{0}^{(0)}(i,0;z_{\tau})\equiv 0$.

Performing the full procedure for a wide enough quantity $Q$ of radii $R$, each one at some angle $\theta_{q}$:
\begin{equation}
\left\lbrace\theta_{q}=q\cdot\frac{2\pi}{Q}\right\rbrace_{q=0}^{Q-1},
\label{totalangles}
\end{equation}
we will be able to approach $2N+1$ discrete elements of the set (\ref{eie:03}), introduced in the Theorem \ref{TheoremComplete}:

\begin{equation}
\left\lbrace \mbox{Re}Z^{(n)}_{0}(1,0,z)\vert_{\Gamma}\right\rbrace_{n=0}^{N}\cup\left\lbrace \mbox{Re}Z^{(n)}_{0}(i,0,z)\vert_{\Gamma}\right\rbrace_{n=1}^{N}.
\label{numerical:set}
\end{equation}

Finally, executing a classical Gram-Schmidt orthonormalizing process over such elements, and applying a standard cubic splines interpolating method, we will obtain a set of $2N+1$ orthonormal piecewise continuous functions, defined at every point of the boundary $\Gamma$:  
\begin{equation}
\left\lbrace u_{n}(l):l\in\Gamma\right\rbrace_{n=0}^{2N}.
\label{ortho:set}
\end{equation}

\subsection{Obtention of the constant coefficients for approaching the boundary condition.}
\label{coefficients:alpha}

Once we have build the set (\ref{ortho:set}), there exist a variety of techniques for approaching the coefficients $\{\alpha_{n}\}_{n=0}^{2N}$, that will accomplish the task of asymptotically attain the boundary condition $u_{\textbf{c}}\vert_{\Gamma}$ (see e.g. \cite{ckr}). Here, we will allocate as many points at the boundary as the number of functions contained in (\ref{ortho:set}).

More precisely, we will trace $2N+1$ radii from the center $z_0=0$, at the angles

\begin{equation}
\left\lbrace \omega_{n}=n\cdot\frac{2\pi}{2N+1}\right\rbrace_{n=0}^{2N},
\label{angle:set}
\end{equation}
and we will evaluate the boundary condition $u_{\textbf{c}}\vert_{\Gamma}$ on the points $(x_{n},y_{n})$, where the radii intersect the boundary $\Gamma$, obtaining a set of values $\{\gamma_{n}\}_{n=0}^{2N}$. 

We will assume that the number $N$ of formal powers is big enough to warrant that, given a number $\varepsilon>0$, there exist a set of real coefficients $\{\alpha_{n}\}_{n=0}^{2N}$ such that

\begin{equation}
\left(\int_{\Gamma}\left(\sum\limits_{n=0}^{2N}\alpha_{n}u_{n}(l)-u_{\textbf{c}}\vert_{\Gamma}\right)^{2}dl\right)^{\frac{1}{2}}<\varepsilon,\ l\in\Gamma.
\nonumber
\end{equation}

Thus, the following matrix relation must hold
\begin{equation}
\mathbb{U}\overrightarrow{\alpha}=\overrightarrow{\gamma},
\nonumber
\end{equation}
where
\begin{equation}
\mathbb{U}=\left[
\begin{array}{c c c c}
u_{0}(x_{0},y_{0}) & u_{1}(x_{0},y_{0}) & \cdots & u_{2N}(x_{0},y_{0})\\
u_{0}(x_{1},y_{1}) & u_{1}(x_{1},y_{1}) & \cdots & u_{2N}(x_{1},y_{1})\\
u_{0}(x_{2},y_{2}) & u_{1}(x_{2},y_{2}) & \cdots & u_{2N}(x_{2},y_{2})\\
\vdots & \vdots & \ddots & \vdots\\
u_{0}(x_{2N},y_{2N}) & u_{1}(x_{2N},y_{2N}) & \cdots & u_{2N}(x_{2N},y_{2N})
\end{array}
\right],
\label{matrix}
\end{equation}
\begin{equation}
\overrightarrow{\alpha}=\left[\alpha_{0};\alpha_{1};\alpha_{2};\cdots ;\alpha_{2N}\right];
\nonumber
\end{equation}
and
\begin{equation}
\overrightarrow{\gamma}=\left[\gamma_{0};\gamma_{1};\gamma_{2};\cdots ;\gamma_{2N}\right].
\nonumber
\end{equation}

Given the orthonormality of the set (\ref{ortho:set}), there will exist an inverse matrix $\mathbb{U}^ {-1}$ for (\ref{matrix}) such that

\begin{equation}
\overrightarrow{\alpha}=\mathbb{U}^ {-1}\overrightarrow{\gamma}.
\nonumber
\end{equation}

\subsection{Comparison between the solutions obtained employing the original conductivity, and the piecewise approached conductivity.}
\label{comparison}

We will perform the method described in the Subsection \ref{highacc}, considering the domain $\Omega$ as the unit circle, and a total error $\mathcal{E}$ of the form:
\begin{equation}
\mathcal{E}=\left(\int_{\Gamma}\left(\sum\limits_{n=0}^{2N}\alpha_{n}u_{n}(l)-u_{\textbf{c}}\vert_{\Gamma}\right)^{2}dl\right)^{\frac{1}{2}}.
\label{totalerror}
\end{equation}
Beside, we shall fix the following parameters:
\begin{enumerate}
\item Maximum number of formal powers $N=30$.
\item Total number of radii $Q=1000$, located at the set of angles defined in (\ref{totalangles}). 
\item Total number of points per radius, where the formal powers are defined, $P=1000$.
\end{enumerate}

Therefore, we will obtain a set of $61$ orthonormal functions defined at the boundary $\Gamma$. 

Valuing the boundary condition (\ref{bs10}) at the intersecting points of the $Q$ radii $R$ with $\Gamma$, and employing the conductivity $\sigma$ defined in (\ref{bs09}), the procedure detailed in the Subsection \ref{highacc}, reached a total error:
\begin{equation}
\mathcal{E}_{1}= 3.6786\times 10^{-9}.
\label{analyticerror}
\end{equation}

The same procedure was executed considering the piecewise separable-variables conductivity function $\sigma_{\textbf{pw}}(x,y)$, introduced in (\ref{abv01}), with the same parameters indicated in Subsection \ref{piecewise}, and assuming also that:
\begin{enumerate}
\item The number of subsections $K=1000$; and
\item The constants
\begin{equation}
A_{k}=60,\ k=1,2,...,1000.
\nonumber
\end{equation}
\end{enumerate}  
The resultant total error was:
\begin{equation}
\mathcal{E}=7.7263\times 10^{-7}.
\nonumber
\end{equation}

The magnitude of this total error indicates that $\sigma_{\textbf{pw}}(x,y)$, introduced in (\ref{abv01}), can effectively approach separable-variables functions, and in consequence, we can use it to apply the elements of the Pseudoanalytic Function Theory for analysing the forward Dirichlet boundary value problem of (\ref{int:00}), in the cases when the conductivities $\sigma$ do not originally posses a separable-variables form. 

\begin{conjecture}
\label{con:00}
Let $\sigma$ be a conductivity function, defined within a bounded domain $\Omega \left(\mathbb{R}^{2}\right)$, such that:
\begin{equation}
\sigma:\Omega\left(\mathbb{R}^{2} \right)\rightarrow\mathbb{R};
\nonumber
\end{equation} and let $\Gamma$ be the boundary of such domain. Then it is possible to approach $\sigma$ by means of a piecewise separable-variables function of the form (\ref{abv01}):
\begin{equation}
   \sigma_{\textbf{pw}} (x,y) = \left\lbrace
     \begin{array}{l c l}
       \frac{x+A_{(1)}}{\chi_{1}+A_{(1)}}\cdot f_{(1)}(y) &:& x \in [x_{0},x_{1});\\
       \frac{x+A_{(2)}}{\chi_{2}+A_{(2)}}\cdot f_{(2)}(y) &:& x \in [x_{1},x_{2});\\
       \cdots & & \\
       \frac{x+A_{(K)}}{\chi_{K}+A_{(K)}}\cdot f_{(K)}(y) &:& x \in [x_{K-1},x_{K}];
     \end{array}
   \right.
   \nonumber   
\end{equation}
which can be employed for constructing a finite set of formal powers of the form (\ref{eie:03}), in order to approach solutions of the forward Dirichlet boundary value problem corresponding to the equation (\ref{int:00}), when a boundary condition $u_{\textbf{c}}\vert_{\Gamma}$ is imposed.
\end{conjecture}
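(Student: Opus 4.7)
The plan is to split the statement into two parts: (i) that every admissible $\sigma$ is indeed the uniform limit, on $\overline{\Omega}$, of functions of the form (\ref{abv01}); and (ii) that the sequence of Dirichlet solutions produced by the formal-power machinery applied to these piecewise-separable approximants converges to the solution for the original $\sigma$. Part (i) is where the promised ``limiting case'' Proposition really lives; part (ii) is essentially a gluing of Theorems \ref{th:01} and \ref{TheoremComplete} with a standard elliptic stability estimate.

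For (i) I would work under the physically natural hypothesis that $\sigma$ is continuous on $\overline{\Omega}$ and bounded below by some $\sigma_{\min}>0$, so that the problem is uniformly elliptic. Given $\varepsilon>0$, refine the vertical partition $\{X_{(k)}\}$ so that each strip has width $h<h(\varepsilon)$, and take the interpolation density $J$ large enough that each piecewise-linear $f_{(k)}$ satisfies $\|f_{(k)}-\sigma(\chi_k,\cdot)\|_{\infty}<\varepsilon/3$. On the $k$-th strip use the algebraic identity
\[
\frac{x+A_{(k)}}{\chi_k+A_{(k)}}=1+\frac{x-\chi_k}{\chi_k+A_{(k)}},
\]
so that with $A_{(k)}$ bounded below by a fixed positive constant the factor differs from $1$ by $O(h)$ uniformly on the strip. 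A triangle inequality combining this deviation with the interpolation error and the uniform-continuity modulus of $\sigma$ then gives $\|\sigma_{\textbf{pw}}-\sigma\|_{L^{\infty}(\Omega)}<\varepsilon$, which establishes the density claim.

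For (ii) I observe that on each strip $\sigma_{\textbf{pw}}$ has separable-variables form $\sigma_{1}(x)\sigma_{2}(y)$ with $\sigma_{1}=(x+A_{(k)})/(\chi_{k}+A_{(k)})$ and $\sigma_{2}=f_{(k)}(y)$. Theorem \ref{th:01} then supplies the period-two generating sequence, Definition \ref{def:00} supplies the formal powers, and Theorem \ref{TheoremComplete} guarantees that the real parts $\{\mbox{Re}\,Z_{0}^{(n)}(1,0;z)|_{\Gamma},\,\mbox{Re}\,Z_{0}^{(n)}(i,0;z)|_{\Gamma}\}_{n=0}^{\infty}$ form a complete system in $L^{2}(\Gamma)$ for the Dirichlet problem of $\nabla\cdot(\sigma_{\textbf{pw}}\nabla u)=0$. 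To transfer this back to the original equation, invoke the classical continuous dependence of Dirichlet solutions of uniformly elliptic divergence-form operators on their coefficients: an $H^{1}$ energy estimate followed by the trace theorem bounds the difference of solutions by a constant multiple of $\|\sigma_{\textbf{pw}}-\sigma\|_{\infty}$. Combined with part (i), the truncated series constructed from formal powers of $\sigma_{\textbf{pw}}$ converges to the exact solution associated with $\sigma$, and the asymptotic error (\ref{totalerror}) can be driven below any prescribed threshold.

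The hard part will be reconciling the jump discontinuities of $\sigma_{\textbf{pw}}$ across the seams $X_{(k)}$ with the pointwise Vekua framework of Section 2: the derivation of the generating pairs in (\ref{pre:07}) implicitly assumes enough regularity on $p$ for $\partial_{z}p$ and $\partial_{\overline{z}}p$ to be classical derivatives, yet $\sigma_{\textbf{pw}}$ and hence $p$ are only piecewise smooth. Two routes seem viable. The first is to mollify $\sigma_{\textbf{pw}}$ in an $h$-thin neighborhood of each seam, show by the same $L^{\infty}$-stability estimate that the smoothing cost is absorbed by $\varepsilon$, and apply the formal-power construction to the smoothed conductivity. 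The second is to interpret the Bers integrals across the seams in a weak sense and verify that the discrete scheme (\ref{high}) is consistent with this interpretation, which is essentially what is numerically observed in Subsection \ref{comparison}. Carrying either route out rigorously — in particular, proving that path-independence of the $(F_{0},G_{0})$-integrals survives the transition across the seams — is what I expect to be the technical bottleneck and the reason the statement appears as a conjecture rather than a theorem.
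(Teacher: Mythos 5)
There is no paper proof to compare against here: the statement is explicitly left as a \emph{conjecture}, and the paper itself tells you why --- ``not any formal extension of the proof posed in \cite{cck}, about the completeness of the set (\ref{eie:03}), is known for the case of piecewise separable-variables functions within bounded domains.'' What the paper does supply is (a) the numerical evidence of Subsection \ref{comparison}, and (b) Proposition \ref{pro:00}, which is essentially your part (i): the refinement argument showing $\frac{x+A_{(k)}}{\chi_{k}+A_{(k)}}\rightarrow 1$ and $f_{(k)}(y)\rightarrow\sigma(\chi_{k},y)$ as $K,J\rightarrow\infty$. So your part (i) reproduces (with more care about uniform continuity and an explicit $L^{\infty}$ bound, which is a genuine improvement in rigor) what the paper proves; it is not the content of the conjecture.

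The gap is in your part (ii), and it sits exactly where the paper says the open problem is. You write that Theorem \ref{TheoremComplete} ``guarantees'' that the traces of the formal powers built from $\sigma_{\textbf{pw}}$ form a complete system for $\nabla\cdot(\sigma_{\textbf{pw}}\nabla u)=0$. It does not: the completeness result of \cite{cck} is proved for a conductivity that is separable-variables on the whole of $\Omega$, with enough regularity for Bers' machinery, whereas $\sigma_{\textbf{pw}}$ is separable only strip by strip, with different $f_{(k)}$ on different strips and jump discontinuities across the seams $X_{(k)}$. There the characteristic coefficient $b_{(F_{0},G_{0})}=p^{-1}\partial_{\overline{z}}p$ is not a classical (or even locally bounded) function, Theorem \ref{th:01} gives you a generating sequence only inside each strip, and the existence, path-independence and pseudoanalyticity of the resulting ``formal powers'' across seams is precisely what is unproven. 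You do flag this honestly at the end and sketch two possible repairs (mollification near the seams, or a weak reading of the Bers integrals consistent with the discrete scheme (\ref{high})), but neither is carried out, and the concluding transfer step --- from $\|\sigma_{\textbf{pw}}-\sigma\|_{\infty}$-stability of the PDE solutions to the claim that the truncated formal-power series approaches the boundary condition for the original $\sigma$ --- conflates continuous dependence of solutions on coefficients with completeness of the boundary trace system, which is a different statement and would itself need proof. In short: your plan is a reasonable research programme, and your diagnosis of the bottleneck agrees with the paper's, but as written it assumes the very completeness extension whose absence is the reason the statement is a conjecture rather than a theorem.
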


The postulate remains a conjecture because not any formal extension of the proof posed in \cite{cck}, about the completeness of the set (\ref{eie:03}), is known for the case of piecewise separable-variables functions within bounded domains.

\subsection{A special case of piecewise separable-variables conductivity functions.}

It is natural to inquire about the behaviour of the interpolating method described in the Subsection \ref{piecewise}, when changing the values of the constant parameters $K$, $A_{(k)}$ or $J$. In this direction, the work \cite{ioprrh} provides some information about such specific questions. Nevertheless, the validity of the Conjecture \ref{con:00} allows us to study one property that will significantly simplify our analysis.

\begin{proposition}
\label{pro:00}
Let $\Omega\left(\mathbb{R}^2\right)$ be a bounded domain, and let us denote by $\Gamma$ its boundary. Every conductivity function
\begin{equation}
\sigma:\Omega\left(\mathbb{R}^{2}\right)\rightarrow\mathbb{R};
\nonumber
\end{equation}
can be considered the limiting case of a piecewise separable-variables conductivity function $\sigma_{\textbf{pw}}$ of the form (\ref{abv01}), at every point $(x,y)\in\Omega$, when the number $K$ of subdomains introduced in Subsection \ref{piecewise}, and the number $J$ of collected values over $\varphi_{(k)}$, defined in (\ref{chi:inter}), tends to infinity:

\begin{equation}
\sigma(x,y)=\lim_{K,J\rightarrow\infty}\sigma_{\textbf{pw}}(x,y).
\label{eie:06}
\end{equation}
Moreover, since
\begin{equation}
\lim_{K\rightarrow\infty}\frac{x+A_{(k)}}{\chi_{k}+A_{(k)}}=1,
\nonumber
\end{equation}
from the Theorem \ref{th:01}, it follows that the generating sequence of this limiting case will be periodic, with period $c=1$. 
\end{proposition}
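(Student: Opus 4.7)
The plan is to split the Proposition into its two assertions and handle each separately, since the first is a question about pointwise convergence of the interpolation, while the second follows almost mechanically from Theorem \ref{th:01} once the first is in hand.

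For the convergence $\sigma(x,y)=\lim_{K,J\to\infty}\sigma_{\textbf{pw}}(x,y)$, I would first fix an arbitrary point $(x,y)\in\Omega$, together with a constant $A>0$ so that $x+A\neq 0$ throughout $\Omega$, and take $A_{(k)}\equiv A$ for all $k$ (this is the choice made in the numerical experiments and avoids a distracting diagonal argument on $A_{(k)}$). For each $K$, the point $(x,y)$ lies in some unique subsection $[x_{k-1},x_k)$, whose midline $\varphi_{(k)}$ has $x$-coordinate $\chi_k$ satisfying $|x-\chi_k|\le (x_k-x_{k-1})/2$. Since the subsections are equidistant, $(x_k-x_{k-1})\to 0$ as $K\to\infty$, so the first factor converges as
\begin{equation}
\lim_{K\to\infty}\frac{x+A}{\chi_k+A}=1,
\nonumber
\end{equation}
which is the identity already cited in the statement. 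For the second factor, the function $f_{(k)}$ is a piecewise linear interpolant through $J$ equispaced samples of $\sigma(\chi_k,\,\cdot\,)$ along $\varphi_{(k)}$; assuming at least continuity of $\sigma$ in $y$, the standard error estimate for piecewise linear interpolation gives $f_{(k)}(y)\to\sigma(\chi_k,y)$ as $J\to\infty$, uniformly in $y$. Finally, continuity of $\sigma$ in $x$ yields $\sigma(\chi_k,y)\to\sigma(x,y)$ as $\chi_k\to x$, and stitching the three limits together (in the order $J\to\infty$, then $K\to\infty$, or jointly under a mild uniform modulus-of-continuity hypothesis) gives (\ref{eie:06}).

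For the periodicity claim, I would apply Theorem \ref{th:01} directly to the limiting generating pair. The piecewise conductivity has the separable form $\sigma_{\textbf{pw}}(x,y)=\sigma_{1,(k)}(x)\,f_{(k)}(y)$ with $\sigma_{1,(k)}(x)=(x+A)/(\chi_k+A)$, so on each subsection the associated $p$ of (\ref{eie:00}) is $p=(\sigma_{1,(k)})^{-1/2} f_{(k)}^{1/2}$. Passing to the limit $K\to\infty$ collapses $\sigma_{1,(k)}\to 1$, so the limiting $p$ depends only on $y$, i.e.\ $p_1(x)\equiv 1$ in the notation of Theorem \ref{th:01}. The last sentence of that theorem then gives a periodic generating sequence of period $c=1$, which is exactly the conclusion required.

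The main obstacle, as I see it, is not the algebra but identifying the correct function space in which the limit (\ref{eie:06}) is meant to hold. The statement of the Proposition places no regularity hypothesis on $\sigma$ beyond being real valued on $\Omega$, yet piecewise linear $y$-interpolation only converges pointwise under some continuity, and the subsequent transfer to convergence of the generating pair and its generating sequence (which is what is actually used in the applications) typically needs at least uniform convergence on compact subsets, together with non-vanishing of $\sigma$ so that $p$ and $1/p$ stay bounded. A rigorous version of the proof would therefore need to assume $\sigma\in C(\overline{\Omega})$ with $\sigma>0$, and then the two-step limit above gives uniform convergence of $\sigma_{\textbf{pw}}\to\sigma$ and hence of $F_0,G_0$ to their limits; the period-one conclusion of Theorem \ref{th:01} is then stable under this limit and the proof closes.
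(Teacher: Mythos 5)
Your proposal follows essentially the same route as the paper's own proof: show the factor $\frac{x+A_{(k)}}{\chi_{k}+A_{(k)}}\to 1$ as the subsection width shrinks, observe that as $J\to\infty$ the interpolant $f_{(k)}(y)$ recovers the values $\sigma(\chi_{k},y)$ along the midlines, and then invoke the last part of Theorem \ref{th:01} with $p_{1}(x)\equiv 1$ to obtain the period $c=1$ generating sequence. The only difference is that you make explicit the continuity (and positivity) hypotheses on $\sigma$ that the paper's argument uses silently when it identifies the limiting $f_{(k)}(y)$ with $\sigma$ on the collapsed subdomains, which is a fair and useful sharpening rather than a different approach.
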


\begin{proof}
Let $x_{0}$ be the minimum of the subset of $x$-coordinates corresponding to the points $(x,y)\in\Omega$, and let $x_{K}$ be the maximum. We can divide $\Omega$ into $K$ subdomains $\left\lbrace\Omega_{k}\right\rbrace_{k=1}^{K}$, by employing a set of equidistant $y$-axis parallel lines (see Figure \ref{fig:eie00a}):
\begin{equation}
\left\lbrace X_{(k-1)}=x_{0}+\frac{(k-1)(x_{K}-x_{0})}{K} \right\rbrace_{k=1}^{K+1}.
\label{X-lines}
\end{equation}
Thus, the subdomains will be defined in the following form:
\begin{equation}
\left\lbrace \Omega_{k}\vert x,y\in\mathcal{L}_{k}\left(X_{(k-1)},X_{(k)}\right)\cap\Omega\right\rbrace_{k=1}^{K},
\nonumber
\end{equation}
where $\mathcal{L}_{k}\left(X_{(k-1)},X_{(k)}\right)$, represent the set of points $(x,y)$ within the subsection of the plane bounded by the pair of $y$-axis parallel lines $X_{(k-1)}$ and $X_{(k)}$. 

Let us consider the set of common $x$-coordinates belonging to the lines $\varphi_{(k)}$ introduced in (\ref{chi:inter}):
\begin{equation}
\left\lbrace \chi_{k}\right\rbrace_{k=1}^{K}.
\nonumber
\end{equation}
From (\ref{X-lines}), we have that
\begin{equation}
\lim_{K\rightarrow\infty}\vert x_{k}-x_{k-1}\vert=0;\ k=1,...,K;
\nonumber
\end{equation}
where $x_{k}$ are the common $x$-coordinate of the lines $X_{(k)}$, and in consequence we will have that
\begin{equation}
\forall x\in\Omega_{k} \ :\ x\rightarrow\chi_{k}; \\
\nonumber
\end{equation}
It immediately follows
\begin{equation}
\lim_{K\rightarrow\infty}\frac{x+A_{(k)}}{\chi_{k}+A_{(k)}}=1,
\label{limitproof}
\end{equation}
thus every subdomain $\Omega_{k}$ will be conformed by the points
\begin{equation}
\left\lbrace \Omega_{k} \ \vert \ (x,y)\in \varphi_{(k)}\cap\Omega\right\rbrace.
\nonumber
\end{equation} 

Furthermore, since the number $J$ of conductivity values, obtained by evaluating $\sigma$ along the line $\varphi_{(k)}$, also tends to infinite, not any interpolation method will be required for approaching $f_{k}(y)$. It will simply coincide with the values of $\sigma$ evaluated at $(\chi_{k},y)$.

Finally, it follows from (\ref{limitproof}) that
\[
\lim_{K\rightarrow\infty}\frac{x+A_{k}}{\chi_{k}+A_{k}}\cdot f_{k}(y)=f_{k}(y).
\]
Thus, according to the Theorem \ref{th:01}, the generating sequence for numerically approaching the subset of formal powers:
\begin{equation}
\left\lbrace Z_{0}^{(n)}(1,0;z),Z_{0}^{(n)}(i,0;z)\right\rbrace_{n=0}^{N}
\end{equation}
will be periodic, with period $c=1$.
\end{proof}

The last proposition indicates that the full procedure described in the Subsection \ref{highacc}, can be performed considering:
\begin{equation}
\begin{array}{c}
F_{0}=\sqrt{\sigma},\ G_{0}=i\left(\sqrt{\sigma}\right)^{-1},\\
(F_{0},G_{0})=(F_{1},G_{1}).
\end{array}
\nonumber
\end{equation}
Then, using the same parameters $N$, $Q$ and $P$, shown at the beginning of the Section \ref{comparison}, and after performing the full numerical procedure, the total error was
\begin{equation}
\mathcal{E}_{2}=4.2458\times 10^{-9},
\label{errorimp}
\end{equation}
which is indeed bigger than the one obtained in (\ref{analyticerror}), where all operations were performed strictly following the postulates of the Pseudoanalytic Function Theory \cite{bers}. Nevertheless, the error (\ref{errorimp}) is small enough for considering we have obtained an acceptable approach of the boundary condition.

\begin{remark}
Executing the same logical steps described in Subsection \ref{piecewise}, but dividing the domain $\Omega$ by a set of $x$-axis parallel lines
\begin{equation}
\left\lbrace Y_{(0)}, Y_{(1)},..., Y_{(K)}\right\rbrace,
\nonumber
\end{equation}
we can approach a piecewise separable-variables conductivity function of the form
\begin{equation}
   \sigma_{\textbf{pw}} (x,y) = \left\lbrace
     \begin{array}{l c l}
       \frac{y+A_{(1)}}{\iota_{1}+A_{(1)}}\cdot f_{(1)}(x) &:& y \in [y_{0},y_{1});\\
       \frac{y+A_{(2)}}{\iota_{2}+A_{(2)}}\cdot f_{(2)}(x) &:& y \in [y_{1},y_{2});\\
       \cdots & & \\
       \frac{y+A_{(K)}}{\iota_{K}+A_{(K)}}\cdot f_{(K)}(x) &:& y \in [y_{K-1},y_{K}];
     \end{array}
   \right.
   \nonumber
\end{equation}
whose limiting case, according to the Proposition \ref{pro:00}, will reach a periodic generating sequence with period $c=2$. More precisely, the Bers generating pairs will have the form
\begin{equation}
\begin{array}{l c l}
F_{0}=\sqrt{\sigma},\ G_{0}=i(\sqrt{\sigma})^{-1},\\
F_{1}=(\sqrt{\sigma})^{-1},\ G_{1}=i\sqrt{\sigma}.
\end{array}
\nonumber
\end{equation}
Performing the full numerical calculations, and considering the same values $N$, $Q$ and $P$ of the Subsection \ref{comparison}, we obtained a total error
\begin{equation}
\mathcal{E}=5.0863\times 10^{-9},
\nonumber
\end{equation}
which is slightly different that the error (\ref{analyticerror}), obtained for the limiting case where $c=1$. Nevertheless, it is possible to appreciate that the computational resources required for analysing the forward problem when $c=2$, are bigger than such required for $c=1$. Therefore, hereafter we will exclusively utilize the approach where $c=1$.
\end{remark}

\section{Examples of conductivity functions with exact representations.}
\label{exact:representation}

We will analyse a selected set of conductivity functions, for which an exact solution is known, in order to impose it as the boundary condition $u_{\textbf{c}}\vert_{\Gamma}$ to be approached. Once more, the experiments are performed within the unit circle. Notice that none of these examples posses a separable-variables form.

\subsection{The exponential case.}
\begin{proposition}
Let us consider the conductivity function 
\begin{equation}
\sigma=e^{\alpha xy},
\label{abv02}
\end{equation}
where $\alpha$ is a real constant. A particular solution of (\ref{int:00}) is
\begin{equation}
u=e^{-\alpha xy}.
\label{abv03}
\end{equation}
\end{proposition}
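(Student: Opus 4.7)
The plan is to establish the proposition by direct verification, since both $\sigma$ and the proposed particular solution $u$ are given explicitly as elementary exponentials. No pseudoanalytic machinery is needed at this stage; the role of the proposition is not to display a difficult argument but to supply a benchmark against which the numerical method of Subsection \ref{highacc} can be tested on a non-separable conductivity.

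Concretely, I first compute the gradient of $u$:
\[
\partial_{x}u = -\alpha y\, e^{-\alpha xy},\qquad \partial_{y}u = -\alpha x\, e^{-\alpha xy}.
\]
Multiplying componentwise by $\sigma=e^{\alpha xy}$, the exponents cancel exactly, and the flux reduces to the particularly simple linear expressions
\[
\sigma\,\partial_{x}u = -\alpha y,\qquad \sigma\,\partial_{y}u = -\alpha x.
\]
The divergence is then $\partial_{x}(-\alpha y)+\partial_{y}(-\alpha x)=0$, which is precisely the requirement (\ref{int:00}).

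The only ``obstacle'' worth mentioning is conceptual rather than computational: one should notice why the ansatz $u=e^{-\alpha xy}$ is the natural one to check. The key observation is that $\sigma$ and $u$ are reciprocals, $\sigma u\equiv 1$, so the exponential prefactor produced by the chain rule in $\nabla u$ is precisely cancelled by $\sigma$, leaving a flux whose components depend linearly on a single variable each; the divergence of such a field vanishes automatically by mixed-partial cancellation. This structural remark explains why the proposition needs no Vekua-type argument, and it highlights that the real interest of the example lies in the subsequent numerical experiment, where the non-separability of $\sigma$ and the coupling of $x$ and $y$ through the product $xy$ in the exponent become the actual test of the method.
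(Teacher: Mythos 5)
Your verification is correct, and it is essentially the argument the paper implicitly relies on: the proposition is justified by direct substitution, and your computation showing $\sigma\,\partial_{x}u=-\alpha y$, $\sigma\,\partial_{y}u=-\alpha x$ with vanishing divergence is exactly that check. The added structural remark about $\sigma u\equiv 1$ is a nice observation but does not change the route.
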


The Table 1 contains a condensed relation of the errors $\mathcal{E}$ when changing the number of total points per radius $P$, number of radii $Q$ and maximum number of formal powers $N$, considering $\alpha=1$. We shall notice that the behaviour of the total error does not keep a clear pattern of change, when the other parameters are modified. More precisely, the total error does not decrease monotonically when the number of points per radius $P$, or the number of radii $Q$, do it. 

On the other hand, the total error does decrease when the number $N$ of employed formal powers does. Indeed, the magnitudes of the total error indicate that this technique is appropriate for solving this boundary value problem.
\begin{table}
\centering
\begin{tabular}{| c | c | c | c |}
\hline
\tiny{\textbf{Number of formal powers}} & \tiny{\textbf{Number of radii}} & \tiny{\textbf{Number of points per radius}} & \tiny{\textbf{Total error}} \\
$N$ & $P$ & $Q$ & $\mathcal{E}$ \\
\hline
$30$ & $1000$ & $1000$ & $1.9492\times 10^{-8}$\\
\hline
$30$ & $1000$ & $800$ & $2.1979\times 10^{-8}$\\
\hline
$30$ & $1000$ & $600$ & $2.2281\times 10^{-8}$\\
\hline
$30$ & $1000$ & $400$ & $2.2156\times 10^{-8}$\\
\hline
$30$ & $1000$ & $200$ & $1.7241\times 10^{-8}$\\
\hline
$30$ & $800$ & $1000$ & $1.5221\times 10^{-8}$\\
\hline
$30$ & $600$ & $1000$ & $1.7483\times 10^{-8}$\\
\hline
$30$ & $400$ & $1000$ & $1.1651\times 10^{-8}$\\
\hline
$30$ & $200$ & $1000$ & $2.6761\times 10^{-8}$\\
\hline
$20$ & $1000$ & $1000$ & $3.2741\times 10^{-8}$\\
\hline
$10$ & $1000$ & $1000$ & $1.9312\times 10^{-7}$\\
\hline
$30$ & $500$ & $500$ & $1.1009\times 10^{-8}$\\
\hline
$20$ & $500$ & $500$ & $1.0376\times 10^{-8}$\\
\hline
$10$ & $500$ & $500$ & $1.7330\times 10^{-7}$\\
\hline
$30$ & $100$ & $100$ & $6.6772\times 10^{-7}$\\
\hline
$20$ & $100$ & $100$ & $6.9343\times 10^{-7}$\\
\hline
$10$ & $100$ & $100$ & $8.3178\times 10^{-7}$\\
\hline
$10$ & $100$ & $50$ & $7.8103\times 10^{-7}$\\
\hline
$10$ & $50$ & $50$ & $ 8.8030\times 10^{-6}$\\
\hline
$5$ & $50$ & $50$ & $  0.0317$\\
\hline
\end{tabular}\label{table:00}
\caption{Table of values corresponding to the non separable-variables exponential case $\sigma=e^{xy}$.}
\end{table}

The Table 2 contains the errors $\mathcal{E}$ considering $\alpha=5$. Their behaviour is, in general, similar to the one of the prior Table.
\begin{table}
\centering
\begin{tabular}{| c | c | c | c |}
\hline
\tiny{\textbf{Number of formal powers}} & \tiny{\textbf{Number of radii}} & \tiny{\textbf{Number of points per radius}} & \tiny{\textbf{Total error}} \\
$N$ & $P$ & $Q$ & $\mathcal{E}$ \\
\hline
$30$ & $1000$ & $1000$ & $3.3167\times 10^{-7}$\\
\hline
$30$ & $1000$ & $800$ & $3.4754\times 10^{-7}$\\
\hline
$30$ & $1000$ & $600$ & $3.0912\times 10^{-7}$\\
\hline
$30$ & $1000$ & $400$ & $3.3658\times 10^{-7}$\\
\hline
$30$ & $1000$ & $200$ & $3.3271\times 10^{-7}$\\
\hline
$30$ & $800$ & $1000$ & $2.6301\times 10^{-7}$\\
\hline
$30$ & $600$ & $1000$ & $2.2022\times 10^{-7}$\\
\hline
$30$ & $400$ & $1000$ & $5.7358\times 10^{-7}$\\
\hline
$30$ & $200$ & $1000$ & $6.4704\times 10^{-6}$\\
\hline
$20$ & $1000$ & $1000$ & $1.6141\times 10^{-6}$\\
\hline
$10$ & $1000$ & $1000$ & $0.1511$\\
\hline
$30$ & $500$ & $500$ & $3.5765\times 10^{-7}$\\
\hline
$20$ & $500$ & $500$ & $1.1817\times 10^{-6}$\\
\hline
$10$ & $500$ & $500$ & $0.1067$\\
\hline
$30$ & $100$ & $100$ & $7.2363\times 10^{-5}$\\
\hline
$20$ & $100$ & $100$ & $1.1286\times 10^{-4}$\\
\hline
$10$ & $100$ & $100$ & $0.0450$\\
\hline
$10$ & $100$ & $50$ & $0.0210$\\
\hline
$10$ & $50$ & $50$ & $0.0261$\\
\hline
$5$ & $50$ & $50$ & $ 9.4212$\\
\hline
\end{tabular}\label{table:00a}
\caption{Table of values corresponding to the non separable-variables exponential case $\sigma=e^{5xy}$.}
\end{table}

\subsection{The polynomial case.}

\begin{proposition}
Let us assume that the conductivity function has the form
\begin{equation}
\sigma=\alpha(x+y)+10,
\label{abv06}
\end{equation}
thus the function
\begin{equation}
u=\ln\left(\alpha(x+y)+10\right),
\label{abv07}
\end{equation}
will be a solution of (\ref{int:00}).
\end{proposition}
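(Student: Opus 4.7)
The plan is to verify the claim by direct substitution into equation (\ref{int:00}). First, I would set $v(x,y)=\alpha(x+y)+10$, so that $\sigma=v$ and $u=\ln v$. The key observation I expect to exploit is that because $u$ depends on $x$ and $y$ only through the quantity $v$, which is exactly $\sigma$, the product $\sigma\nabla u$ will simplify dramatically.

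Carrying out the computation, I would first note that
\[
\partial_{x}u=\frac{\alpha}{v},\qquad \partial_{y}u=\frac{\alpha}{v},
\]
so that $\nabla u=\frac{\alpha}{v}(1,1)$. Multiplying by $\sigma=v$ cancels the denominator and yields the constant vector field $\sigma\nabla u=\alpha(1,1)$. The divergence of a constant vector field is identically zero, hence $\nabla\cdot(\sigma\nabla u)=0$ throughout $\mathbb{R}^{2}$ (or, more precisely, on the open set where $v\neq 0$, so that $\ln v$ is defined and smooth).

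There is no real obstacle here: the proposition is a one‑line verification, and the only subtlety worth flagging is the domain of definition. Since $u=\ln v$ requires $v>0$, the solution is valid wherever $\alpha(x+y)+10>0$; inside the unit disk $\Omega$ this is guaranteed provided $|\alpha|<10/\sqrt{2}$ (since $|x+y|\le\sqrt{2}$ there), which comfortably covers the range of $\alpha$ used in the numerical experiments of the subsequent section. I would include a short remark to this effect so that the logarithm is unambiguously well defined on $\Omega$, and then conclude by displaying the cancellation $\sigma\nabla u=\alpha(1,1)$ as the essential content of the proof.
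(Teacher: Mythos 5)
Your verification is correct and is exactly the direct-substitution check the paper relies on (it states this proposition without further argument): with $v=\alpha(x+y)+10$ one gets $\sigma\nabla u=\alpha(1,1)$, whose divergence vanishes identically. Your added remark on positivity of $v$ in the unit disk, guaranteeing $\left|\alpha\right|<10/\sqrt{2}$ suffices for the values $\alpha=1,5$ used later, is a sensible refinement but does not change the argument.
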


Among the other cases, this is a singular example, because utilizing a small number of points per radius $P$, and a small number of radii $Q$, together with a relatively small number of formal powers $N$, the convergence of the method is acceptable. We expose the results obtained for the case when $\alpha=1$ in the Table 3.

\begin{table}
\centering
\begin{tabular}{| c | c | c | c |}
\hline
\tiny{\textbf{Number of formal powers}} & \tiny{\textbf{Number of radii}} & \tiny{\textbf{Number of points per radius}} & \tiny{\textbf{Total error}} \\
$N$ & $P$ & $Q$ & $\mathcal{E}$ \\
\hline
$30$ & $1000$ & $1000$ & $3.6530\times 10^{-8}$\\
\hline
$30$ & $1000$ & $800$ & $3.6528\times 10^{-8}$\\
\hline
$30$ & $1000$ & $600$ & $3.6515\times 10^{-8}$\\
\hline
$30$ & $1000$ & $400$ & $3.6482\times 10^{-8}$\\
\hline
$30$ & $1000$ & $200$ & $3.6572\times 10^{-8}$\\
\hline
$30$ & $800$ & $1000$ & $4.3271\times 10^{-8}$\\
\hline
$30$ & $600$ & $1000$ & $3.5882\times 10^{-8}$\\
\hline
$30$ & $400$ & $1000$ & $2.1136\times 10^{-8}$\\
\hline
$30$ & $200$ & $1000$ & $1.1306\times 10^{-8}$\\
\hline
$20$ & $1000$ & $1000$ & $3.2499\times 10^{-8}$\\
\hline
$10$ & $1000$ & $1000$ & $5.9790\times 10^{-8}$\\
\hline
$30$ & $500$ & $500$ & $2.0991\times 10^{-8}$\\
\hline
$20$ & $500$ & $500$ & $3.1329\times 10^{-8}$\\
\hline
$10$ & $500$ & $500$ & $4.4110\times 10^{-8}$\\
\hline
$30$ & $100$ & $100$ & $2.8376\times 10^{-8}$\\
\hline
$20$ & $100$ & $100$ & $6.5198\times 10^{-8}$\\
\hline
$10$ & $100$ & $100$ & $1.1667\times 10^{-7}$\\
\hline
$10$ & $100$ & $50$ & $ 1.1973\times 10^{-7}$\\
\hline
$10$ & $50$ & $50$ & $ 1.1999\times 10^{-7}$\\
\hline
$5$ & $50$ & $50$ & $ 1.6714\times 10^{-7}$\\
\hline
$5$ & $15$ & $15$ & $ 4.0953\times 10^{-5}$\\
\hline
\end{tabular}\label{table:02}
\caption{Table of values corresponding to the non separable-variables polynomial case $\sigma=x+y+10$.}
\end{table}

When considering the case $\alpha=5$, the results presented in Table 4 preserve the properties remarked in the previous Table, even when only $5$ formal powers ($11$ base functions) are used.

\begin{table}
\centering
\begin{tabular}{| c | c | c | c |}
\hline
\tiny{\textbf{Number of formal powers}} & \tiny{\textbf{Number of radii}} & \tiny{\textbf{Number of points per radius}} & \tiny{\textbf{Total error}} \\
$N$ & $P$ & $Q$ & $\mathcal{E}$ \\
\hline
$30$ & $1000$ & $1000$ & $1.5315\times 10^{-7}$\\
\hline
$30$ & $1000$ & $800$ & $1.5330\times 10^{-7}$\\
\hline
$30$ & $1000$ & $600$ & $1.6351\times 10^{-7}$\\
\hline
$30$ & $1000$ & $400$ & $1.5171\times 10^{-7}$\\
\hline
$30$ & $1000$ & $200$ & $1.5303\times 10^{-7}$\\
\hline
$30$ & $800$ & $1000$ & $1.0403\times 10^{-7}$\\
\hline
$30$ & $600$ & $1000$ & $9.3654\times 10^{-8}$\\
\hline
$30$ & $400$ & $1000$ & $5.1980\times 10^{-8}$\\
\hline
$30$ & $200$ & $1000$ & $3.3674\times 10^{-8}$\\
\hline
$20$ & $1000$ & $1000$ & $2.3294\times 10^{-7}$\\
\hline
$10$ & $1000$ & $1000$ & $2.8096\times 10^{-4}$\\
\hline
$30$ & $500$ & $500$ & $6.6396\times 10^{-8}$\\
\hline
$20$ & $500$ & $500$ & $9.4135\times 10^{-8}$\\
\hline
$10$ & $500$ & $500$ & $1.9824\times 10^{-4}$\\
\hline
$30$ & $100$ & $100$ & $2.6497\times 10^{-7}$\\
\hline
$20$ & $100$ & $100$ & $4.1895\times 10^{-7}$\\
\hline
$10$ & $100$ & $100$ & $8.3724\times 10^{-5}$\\
\hline
$10$ & $100$ & $50$ & $  3.1943\times 10^{-5}$\\
\hline
$10$ & $50$ & $50$ & $ 4.7861\times 10^{-5}$\\
\hline
$5$ & $50$ & $50$ & $ 0.0145$\\
\hline
$5$ & $15$ & $15$ & $ 0.0039$\\
\hline
\end{tabular}\label{table:02a}
\caption{Table of values corresponding to the non separable-variables polynomial case $\sigma=5(x+y)+10$.}
\end{table}

\subsection{The Lorentzian case.}
\label{lore:case}
\begin{proposition}
Let the conductivity function have the form
\begin{equation}
\sigma=\frac{1}{\left(x+y\right)^{2}+\alpha}.
\label{abv04}
\end{equation}
An exact solution for the equation (\ref{int:00}) is
\begin{equation}
u=\frac{\left(x+y\right)^{3}}{3}+\alpha\left(x+y\right).
\label{abv05}
\end{equation}
\end{proposition}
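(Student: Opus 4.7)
The plan is to verify the claim by direct substitution into (\ref{int:00}), exploiting the structural fact that both $\sigma$ and $u$ are functions of the single variable $s=x+y$. This reduces the two-dimensional divergence computation to a one-dimensional chain-rule exercise and makes the verification essentially a one-line calculation.

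First I would introduce $s=x+y$, so that $u=s^{3}/3+\alpha s$ and $\sigma=1/(s^{2}+\alpha)$. Since $\partial_{x}s=\partial_{y}s=1$, I obtain $\partial_{x}u=\partial_{y}u=s^{2}+\alpha$ by a single chain-rule application. The key observation, and the whole point of the proposition, is then that $\sigma\,\partial_{x}u=\sigma\,\partial_{y}u=(s^{2}+\alpha)/(s^{2}+\alpha)=1$ identically on $\Omega$, so the flux vector $\sigma\nabla u$ reduces to the constant vector $(1,1)$.

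From this, the divergence $\nabla\cdot(\sigma\nabla u)=\partial_{x}(1)+\partial_{y}(1)=0$ is immediate, which closes the argument. The only point requiring attention is the implicit assumption that $\sigma$ is a well-defined, non-vanishing conductivity on $\Omega$; this amounts to requiring that $(x+y)^{2}+\alpha$ not vanish on the domain under consideration, which for $\alpha>0$ holds on all of $\mathbb{R}^{2}$ and in particular on the unit disk used in the later numerical experiments.

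Strictly speaking, there is no hard step: the only conceptual content is recognising that $u$ has been engineered as an antiderivative, with respect to $s$, of the reciprocal of $\sigma$ viewed as a function of $s$, which forces $\sigma\nabla u$ to be constant and hence divergence-free. The ``obstacle'' is therefore not analytic but motivational, namely explaining how the pair $(\sigma,u)$ in (\ref{abv04})--(\ref{abv05}) was constructed, so that the reader understands the example is non-trivial as a test case for the numerical scheme even though its exact-solution verification is trivial.
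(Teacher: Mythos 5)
Your verification is correct and is essentially the paper's own (implicit) argument: the paper offers these examples as checks by direct substitution, and your observation that $\sigma\nabla u=(1,1)$ is constant is just a clean way of organizing that substitution. No gap to report.
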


We will analyse the case when the constant $\alpha=1$, showing the obtained results in the Table 5. Once more, we detect that decrementing the number of points per radius $P$, as well the number of radii $Q$, does not necessarily provoke a diminution of the total error $\mathcal{E}$.

\begin{table}
\centering
\begin{tabular}{| c | c | c | c |}
\hline
\tiny{\textbf{Number of formal powers}} & \tiny{\textbf{Number of radii}} & \tiny{\textbf{Number of points per radius}} & \tiny{\textbf{Total error}} \\
$N$ & $P$ & $Q$ & $\mathcal{E}$ \\
\hline
$30$ & $1000$ & $1000$ & $1.1213\times 10^{-8}$\\
\hline
$30$ & $1000$ & $800$ & $1.9372\times 10^{-8}$\\
\hline
$30$ & $1000$ & $600$ & $1.7023\times 10^{-8}$\\
\hline
$30$ & $1000$ & $400$ & $1.9995\times 10^{-8}$\\
\hline
$30$ & $1000$ & $200$ & $1.9013\times 10^{-8}$\\
\hline
$30$ & $800$ & $1000$ & $1.6567\times 10^{-8}$\\
\hline
$30$ & $600$ & $1000$ & $1.8385\times 10^{-8}$\\
\hline
$30$ & $400$ & $1000$ & $1.2829\times 10^{-8}$\\
\hline
$30$ & $200$ & $1000$ & $3.7494\times 10^{-8}$\\
\hline
$20$ & $1000$ & $1000$ & $1.1449\times 10^{-7}$\\
\hline
$10$ & $1000$ & $1000$ & $4.7292\times 10^{-4}$\\
\hline
$30$ & $500$ & $500$ & $1.3655\times 10^{-8}$\\
\hline
$20$ & $500$ & $500$ & $7.4664\times 10^{-8}$\\
\hline
$10$ & $500$ & $500$ & $5.5399\times 10^{-4}$\\
\hline
$30$ & $100$ & $100$ & $5.9646\times 10^{-7}$\\
\hline
$20$ & $100$ & $100$ & $6.8813\times 10^{-7}$\\
\hline
$10$ & $100$ & $100$ & $4.5270\times 10^{-4}$\\
\hline
$10$ & $100$ & $50$ & $ 4.5305\times 10^{-4}$\\
\hline
$10$ & $50$ & $50$ & $  4.1998\times 10^{-4}$\\
\hline
$5$ & $50$ & $50$ & $  0.0125$\\
\hline
\end{tabular}\label{table:03}
\caption{Table of values corresponding to the non separable-variables Lorentzial case $\sigma=\left(\left(x+y\right)^{2}+1 \right)^{-1}$.}
\end{table}

The case when $\alpha=0.01$, in (\ref{abv04}) and (\ref{abv05}), does provide additional information for our study. The maximum value of the conductivity $\sigma$, at the origin, is ten thousands times bigger than the one belonging to the cases studied before. From this point of view, the notorious increment of the total errors is justified. Yet, it is interesting that the diminution of $P$ and $Q$ do not influence that much in the errors. The Table 6 contains the numerical data that sustain our affirmations.

\begin{table}
\centering
\begin{tabular}{| c | c | c | c |}
\hline
\tiny{\textbf{Number of formal powers}} & \tiny{\textbf{Number of radii}} & \tiny{\textbf{Number of points per radius}} & \tiny{\textbf{Total error}} \\
$N$ & $P$ & $Q$ & $\mathcal{E}$ \\
\hline
$30$ & $1000$ & $1000$ & $0.1671$\\
\hline
$30$ & $1000$ & $800$ & $0.1671$\\
\hline
$30$ & $1000$ & $600$ & $0.1671$\\
\hline
$30$ & $1000$ & $400$ & $0.1671$\\
\hline
$30$ & $1000$ & $200$ & $0.1672$\\
\hline
$30$ & $800$ & $1000$ & $0.1491$\\
\hline
$30$ & $600$ & $1000$ & $0.1283$\\
\hline
$30$ & $400$ & $1000$ & $0.1030$\\
\hline
$30$ & $200$ & $1000$ & $0.0662$\\
\hline
$20$ & $1000$ & $1000$ & $0.6420$\\
\hline
$10$ & $1000$ & $1000$ & $3.1839$\\
\hline
$30$ & $500$ & $500$ & $0.1165$\\
\hline
$20$ & $500$ & $500$ & $0.4509$\\
\hline
$10$ & $500$ & $500$ & $2.2466$\\
\hline
$30$ & $100$ & $100$ & $0.0265$\\
\hline
$20$ & $100$ & $100$ & $0.1559$\\
\hline
$10$ & $100$ & $100$ & $0.9435$\\
\hline
$10$ & $100$ & $50$ & $0.0870$\\
\hline
$10$ & $50$ & $50$ & $0.5879$\\
\hline
$5$ & $50$ & $50$ & $120.3691$\\
\hline
\end{tabular}\label{table:03a}
\caption{Table of values corresponding to the non separable-variables Lorentzian case $\sigma=\left(\left(x+y\right)^{2}+0.01\right)^{-1}$.}
\end{table}

\subsection{The sinusoidal case.}

This example has been selected because it proved to be one of the most challenging cases for testing the performance of the numerical method. Our analysis will be based in the following proposition.

\begin{proposition}
Let us consider the sinusoidal conductivity
\begin{equation}
\sigma=1+\sin \alpha xy.
\label{abv08}
\end{equation}
We can verify by direct substitution that the function
\begin{equation}
u=\left(\tan\left(\frac{\alpha xy}{2}\right)+1\right)^{-1},
\label{abv09}
\end{equation}
is a particular solution of (\ref{int:00}).
\end{proposition}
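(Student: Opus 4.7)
The proof is a direct verification, so the plan is essentially a computation. I would introduce the shorthand $\theta = \alpha xy/2$ so that $\sigma = 1 + \sin(2\theta)$ and $u = (\tan\theta + 1)^{-1}$, and then compute $\sigma\partial_x u$ and $\sigma\partial_y u$ separately, looking for a collapse that makes the divergence obvious.

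First, I would apply the chain rule to $u$. Writing $v = \tan\theta + 1$, one gets $\partial_x u = -v^{-2}\sec^{2}\theta\cdot(\alpha y/2)$ and $\partial_y u = -v^{-2}\sec^{2}\theta\cdot(\alpha x/2)$. The clean step is then to multiply by $\sigma$. The key observation is the identity
\begin{equation}
1 + \sin(2\theta) \;=\; \sin^{2}\theta + \cos^{2}\theta + 2\sin\theta\cos\theta \;=\; (\sin\theta + \cos\theta)^{2}, \nonumber
\end{equation}
which, combined with $\sec^{2}\theta = 1/\cos^{2}\theta$, gives
\begin{equation}
\sigma\sec^{2}\theta \;=\; \frac{(\sin\theta+\cos\theta)^{2}}{\cos^{2}\theta} \;=\; (\tan\theta+1)^{2} \;=\; v^{2}. \nonumber
\end{equation}
Therefore the factor $v^{-2}$ in $\partial_x u$ and $\partial_y u$ cancels exactly against $\sigma$, and one obtains the remarkably simple expressions $\sigma\partial_x u = -\alpha y/2$ and $\sigma\partial_y u = -\alpha x/2$.

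Once this collapse is in hand, closing the argument is immediate: $\partial_x(\sigma\partial_x u) = \partial_x(-\alpha y/2) = 0$ and $\partial_y(\sigma\partial_y u) = \partial_y(-\alpha x/2) = 0$, so $\nabla\cdot(\sigma\nabla u) = 0$ on the whole plane wherever $u$ is defined, i.e.\ away from the zeros of $v = \tan(\alpha xy/2) + 1$. Inside the unit circle, one should also verify that the singularities of $u$ (where $\tan(\alpha xy/2) = -1$) lie outside $\Omega$ for the values of $\alpha$ actually used in the numerical experiments, so that $u$ and $\sigma\nabla u$ are smooth on $\overline{\Omega}$ and qualify as a classical solution to be imposed as a Dirichlet datum.

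The only conceptual difficulty is recognizing the half-angle identity that converts $\sigma\sec^{2}\theta$ into exactly $v^{2}$; without it the computation looks like an unmanageable mess of trigonometric quotients. Everything else is routine chain-rule bookkeeping, so I would spend most of the exposition on the identity and the cancellation, and only a single line on taking the final two partial derivatives.
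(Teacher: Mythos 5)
Your verification is correct and is exactly the direct substitution the paper invokes (the paper gives no details); the key collapse $\sigma\sec^{2}\theta=(\tan\theta+1)^{2}$ via $1+\sin 2\theta=(\sin\theta+\cos\theta)^{2}$ makes $\sigma\nabla u=-\tfrac{\alpha}{2}(y,x)$, which is manifestly divergence-free. Your caveat about the zeros of $\tan(\alpha xy/2)+1$ is also consistent with the paper, which itself notes that for $\alpha=5$ the condition (\ref{abv09}) fails to be defined at several boundary points.
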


Let us focus our attention into the case $\alpha=1$. The Table 7 contains what it could be considered a set of acceptable approaches for the boundary value problem, displaying a similar behaviour to the rest of cases previously shown. More interesting information reaches when considering $\alpha=5$ in the argument of the sinusoidal expression (\ref{abv08}), but keeping the value $\alpha=1$ in the boundary condition (\ref{abv09}). This is mainly because the condition is not defined at several points of the boundary $\Gamma$, when considering $\alpha=5$.

Even not any formal extension of the Theorem \ref{TheoremComplete} is known for the limiting case posed in the Proposition \ref{pro:00}, about the completeness of the set (\ref{eie:03}), one could expect a diminution of the total error $\mathcal{E}$ when increasing the number $N$ of formal powers, as registered in the other examples. But the Table 8 shows that, for this particular case, the behaviour is the opposite. This implies that we have found an example that could be very significant to adequately understand the properties of the formal powers constructed by virtue of the Proposition \ref{pro:00}. The case must certainly be analysed with more detail in further works.
\begin{table}
\centering
\begin{tabular}{| c | c | c | c |}
\hline
\tiny{\textbf{Number of formal powers}} & \tiny{\textbf{Number of radii}} & \tiny{\textbf{Number of points per radius}} & \tiny{\textbf{Total error}} \\
$N$ & $P$ & $Q$ & $\mathcal{E}$ \\
\hline
$30$ & $1000$ & $1000$ & $1.2451\times 10^{-8}$\\
\hline
$30$ & $1000$ & $800$ & $1.2263\times 10^{-8}$\\
\hline
$30$ & $1000$ & $600$ & $1.1838\times 10^{-8}$\\
\hline
$30$ & $1000$ & $400$ & $1.3812\times 10^{-8}$\\
\hline
$30$ & $1000$ & $200$ & $1.4932\times 10^{-8}$\\
\hline
$30$ & $800$ & $1000$ & $9.4572\times 10^{-9}$\\
\hline
$30$ & $600$ & $1000$ & $1.1218\times 10^{-8}$\\
\hline
$30$ & $400$ & $1000$ & $9.7996\times 10^{-9}$\\
\hline
$30$ & $200$ & $1000$ & $9.6253\times 10^{-9}$\\
\hline
$20$ & $1000$ & $1000$ & $2.3682\times 10^{-8}$\\
\hline
$10$ & $1000$ & $1000$ & $3.9153\times 10^{-5}$\\
\hline
$30$ & $500$ & $500$ & $9.7774\times 10^{-9}$\\
\hline
$20$ & $500$ & $500$ & $2.4493\times 10^{-8}$\\
\hline
$10$ & $500$ & $500$ & $5.6988\times 10^{-5}$\\
\hline
$30$ & $100$ & $100$ & $3.7062\times 10^{-7}$\\
\hline
$20$ & $100$ & $100$ & $5.1172\times 10^{-7}$\\
\hline
$10$ & $100$ & $100$ & $9.3918\times 10^{-5}$\\
\hline
$10$ & $100$ & $50$ & $ 9.3414\times 10^{-5}$\\
\hline
$10$ & $50$ & $50$ & $  6.9851\times 10^{-5}$\\
\hline
$5$ & $50$ & $50$ & $  0.0497$\\
\hline
\end{tabular}\label{table:01}
\caption{Table of values corresponding to the non separable-variables sinusoidal case $\sigma=1+\sin xy$.}
\end{table}

Finally, we shall remark that 
\begin{equation}
\sigma=1+\sin 5xy>0,\ \forall x,y\in\Omega.
\end{equation}
\begin{table}
\centering
\begin{tabular}{| c | c | c | c |}
\hline
\tiny{\textbf{Number of formal powers}} & \tiny{\textbf{Number of radii}} & \tiny{\textbf{Number of points per radius}} & \tiny{\textbf{Total error}} \\
$N$ & $P$ & $Q$ & $\mathcal{E}$ \\
\hline
$30$ & $1000$ & $1000$ & $1.0338\times 10^{4}$\\
\hline
$30$ & $1000$ & $800$ & $1.2828\times 10^{5}$\\
\hline
$30$ & $1000$ & $600$ & $7.1245\times 10^{3}$\\
\hline
$30$ & $1000$ & $400$ & $2.7315\times 10^{4}$\\
\hline
$30$ & $1000$ & $200$ & $1.4515\times 10^{4}$\\
\hline
$30$ & $800$ & $1000$ & $2.7273\times 10^{4}$\\
\hline
$30$ & $600$ & $1000$ & $7.1022\times 10^{3}$\\
\hline
$30$ & $400$ & $1000$ & $2.2060\times 10^{3}$\\
\hline
$30$ & $200$ & $1000$ & $1.9391\times 10^{3}$\\
\hline
$20$ & $1000$ & $1000$ & $3.6205\times 10^{4}$\\
\hline
$10$ & $1000$ & $1000$ & $2.7627\times 10^{5}$\\
\hline
$30$ & $500$ & $500$ & $2.3643\times 10^{3}$\\
\hline
$20$ & $500$ & $500$ & $1.2936\times 10^{4}$\\
\hline
$10$ & $500$ & $500$ & $7.0407\times 10^{4}$\\
\hline
$30$ & $100$ & $100$ & $3.2401$\\
\hline
$20$ & $100$ & $100$ & $116.1873$\\
\hline
$10$ & $100$ & $100$ & $2.2541\times 10^{3}$\\
\hline
$10$ & $100$ & $50$ & $  0.1006$\\
\hline
$10$ & $50$ & $50$ & $  24.2426$\\
\hline
$5$ & $50$ & $50$ & $  47.5417$\\
\hline
\end{tabular}\label{table:01a}
\caption{Table of values corresponding to the non separable-variables sinusoidal case $\sigma=1+\sin 5xy$.}
\end{table}

\section{Conductivities corresponding to geometrical distributions.}

This section is dedicated to analyse what it could well be considered one of the most important contributions of this work to the State of the Art for solving the Dirichlet boundary value problem of (\ref{int:00}), employing the elements of the Pseudoanalytic Function Theory: The study of conductivity distributions rising from geometrical cases.

As it was mentioned before, the tools provided in \cite{bers} and \cite{kpa} impose an important condition, seldom fulfilled in physical experimental models: The conductivity $\sigma$ must be a separable-variables function. The Proposition \ref{pro:00} allows us to overpass this restriction by stating that any conductivity function:
\begin{equation}
\sigma:\Omega\left(\mathbb{R}^{2}\right)\rightarrow\mathbb{R},
\label{sigma:map}
\end{equation}
can be considered the limiting case of a piecewise-defined function, of the form (\ref{abv01}). The Section above studied the cases when the $\sigma$ possesses an exact representation, but the mapping (\ref{sigma:map}) indicates that it is enough to know the value of $\sigma$ at every point $(x,y)$ within the domain $\Omega$. Thus, the conductivities given by geometrical distributions, can be treated with the same tools that we employed in the Section \ref{exact:representation}.

Before continuing with this branch of the analysis, we shall remark that, as it was explained in Subsection \ref{highacc}, when employing the expression (\ref{high}) for numerically approaching the formal powers, we are indeed performing a piecewise interpolating polynomial method of degree $1$, on every pair of valued-points on the radius where the conductivity $\sigma$ is defined. 

This implies that the discontinuities of the first kind, appearing in the conductivity over one radius, will not be taken into account, since there will always exist a line with finite slope, relating the values $\sigma(r[p])$ and $\sigma(r[p+1])$, between which the discontinuity is found.

Nevertheless, the set of examples presented in the upcoming Subsections, will show that this approach provides numerical solutions with an acceptable degree of accuracy.

\subsection{The concentric disks.}

Let us consider a piecewise conductivity function, in polar coordinates, of the form:
\begin{equation}
   \sigma (r) = \left\{
     \begin{array}{l c l}
       100 &:&  r \in [0,0.2);\\
       30 &:&  r \in [0.2,0.4);\\
       20 &:&  r \in [0.4,0.6);\\
       15 &:&  r \in [0.6,0.8);\\
       10 &:&  r \in [0.8,1].
     \end{array}
   \right.
   \label{ns:08}   
\end{equation}

Here $r$ denotes the radius. The Figure 2 illustrates this case. To select a boundary condition for the conductivities defined by geometrical figures, without performing physical measurements, it is not a trivial task. Nevertheless, the conductivity defined in (\ref{ns:08}) may be related with the separable-variables Lorentzian case, studied in Proposition \ref{prop02}. For this reason, the boundary condition will be a variation of the expression (\ref{bs10}):
\begin{equation}
u=\frac{x^3+y^3}{3}+0.01\left(x+y\right).
\nonumber
\end{equation}
\begin{figure}
\centering
\includegraphics[scale=0.30]{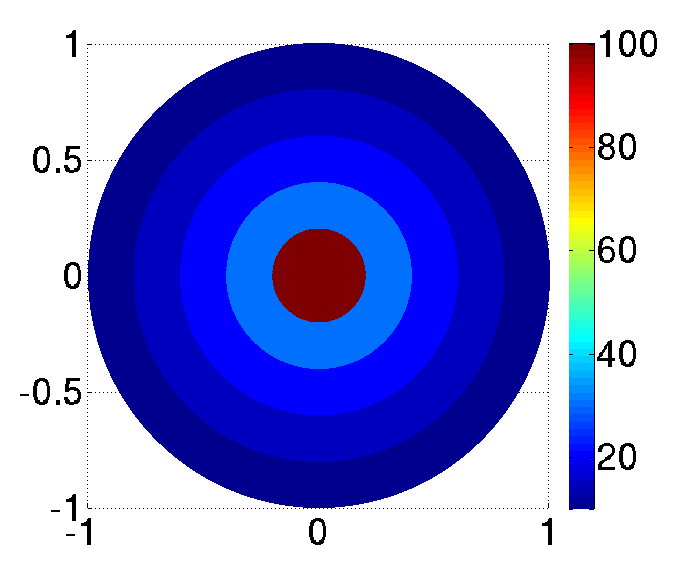}
\label{fig:ns04}
\caption{Conductivity defined according to the expression (\ref{ns:08}).}
\end{figure}

The Table 9 summarizes the total errors $\mathcal{E}$ obtained when changing the number of points $P$, the total radii $Q$, and the employed formal powers $N$. The behaviour of the total error is very similar to the one exhibited in the Lorentzian case of the Subsection \ref{lore:case}.

\begin{table}
\centering
\begin{tabular}{| c | c | c | c |}
\hline
\tiny{\textbf{Number of formal powers}} & \tiny{\textbf{Number of radii}} & \tiny{\textbf{Number of points per radius}} & \tiny{\textbf{Total error}} \\
$N$ & $P$ & $Q$ & $\mathcal{E}$ \\
\hline
$40$ & $1000$ & $1000$ & $3.6234\times 10^{-9}$\\
\hline
$40$ & $1000$ & $800$ & $2.7852\times 10^{-9}$\\
\hline
$40$ & $1000$ & $600$ & $2.5213\times 10^{-9}$\\
\hline
$40$ & $1000$ & $400$ & $2.3199\times 10^{-9}$\\
\hline
$40$ & $1000$ & $200$ & $1.4331\times 10^{-8}$\\
\hline
$40$ & $800$ & $1000$ & $3.6234\times 10^{-9}$\\
\hline
$40$ & $600$ & $1000$ & $3.6234\times 10^{-9}$\\
\hline
$40$ & $400$ & $1000$ & $3.6234\times 10^{-9}$\\
\hline
$40$ & $200$ & $1000$ & $3.6234\times 10^{-9}$\\
\hline
$20$ & $1000$ & $1000$ & $3.3615\times 10^{-9}$\\
\hline
$20$ & $500$ & $500$ & $2.2764\times 10^{-9}$\\
\hline
$40$ & $100$ & $100$ & $2.4841\times 10^{-7}$\\
\hline
$20$ & $100$ & $100$ & $3.3996\times 10^{-7}$\\
\hline
$10$ & $50$ & $50$ & $  8.2633\times 10^{-6}$\\
\hline
$5$ & $50$ & $50$ & $  1.1721\times 10^{-5}$\\
\hline
\end{tabular}\label{table:04}
\caption{Table of values corresponding to the conductivity $\sigma$ of the Figure 2: The concentric disks.}
\end{table}

\subsection{Disk out of the center, within the unit circle.}

Let us consider now the conductivity function illustrated in the Figure 3. The blue section represents conductivity values of $\sigma=10$, whereas the red area expresses $\sigma=100$. The red disk corresponds to the equation
\begin{equation}
(x-0.6)^2+y^2\leq 0.2.
\nonumber
\end{equation}

For this case, and considering once more the possible similarity with the Lorentzian case (\ref{bs10}), we will impose the boundary condition:
\begin{equation}
u=\frac{\left(x-0.6\right)^{3}+y^{3}}{3}+0.01\left(x-0.6+y\right).
\nonumber
\end{equation}

The results of the numerical calculations are displayed in the Table 10. It is remarkable that the total errors are, in general, five thousands times bigger than such shown in the Table 9, where the concentric disks were analysed.

\begin{figure}
\centering
\includegraphics[scale=0.25]{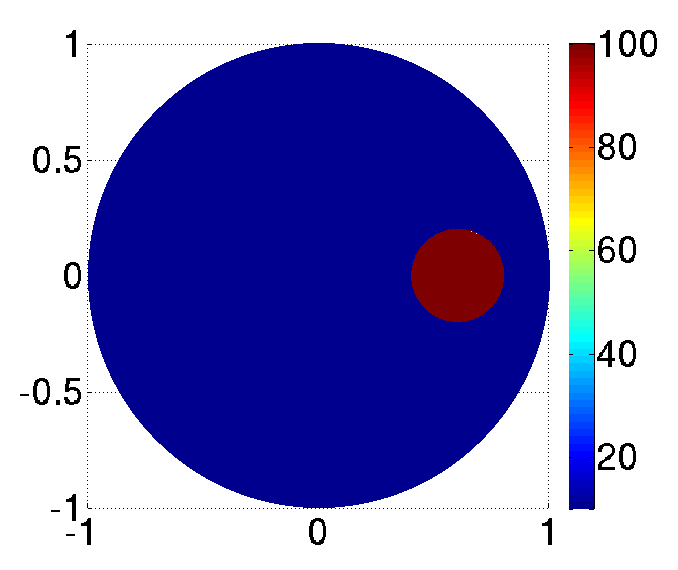}
\label{fig:ns06b}
\caption{Conductivity function with one disk within the unit circle. The red surface represent $\sigma=100$, whereas the blue sections denote $\sigma=10$.}
\end{figure}

\begin{table}
\centering
\begin{tabular}{| c | c | c | c |}
\hline
\tiny{\textbf{Number of formal powers}} & \tiny{\textbf{Number of radii}} & \tiny{\textbf{Number of points per radius}} & \tiny{\textbf{Total error}} \\
$N$ & $P$ & $Q$ & $\mathcal{E}$ \\
\hline
$40$ & $1000$ & $1000$ & $7.8082\times 10^{-4}$\\
\hline
$40$ & $1000$ & $800$ & $6.9845\times 10^{-4}$\\
\hline
$40$ & $1000$ & $600$ & $5.9861\times 10^{-4}$\\
\hline
$40$ & $1000$ & $400$ & $4.3758\times 10^{-4}$\\
\hline
$40$ & $1000$ & $200$ & $3.1089\times 10^{-4}$\\
\hline
$40$ & $800$ & $1000$ & $7.8048\times 10^{-4}$\\
\hline
$40$ & $600$ & $1000$ & $7.7999\times 10^{-4}$\\
\hline
$40$ & $400$ & $1000$ & $7.6313\times 10^{-4}$\\
\hline
$40$ & $200$ & $1000$ & $7.8686\times 10^{-4}$\\
\hline
$20$ & $1000$ & $1000$ & $1.6829\times 10^{-3}$\\
\hline
$20$ & $500$ & $500$ & $1.8330\times 10^{-3}$\\
\hline
$40$ & $100$ & $100$ & $6.1065\times 10^{-5}$\\
\hline
$20$ & $100$ & $100$ & $7.7260\times 10^{-4}$\\
\hline
$10$ & $50$ & $50$ & $  1.5540\times 10^{-3}$\\
\hline
$5$ & $50$ & $50$ & $  3.9329\times 10^{-3}$\\
\hline
\end{tabular}
\label{table:05}
\caption{Table of values corresponding to the conductivity $\sigma$ of the Figure 3: One disk out of the center of the unit circle.}
\end{table}

\subsection{Square within the unit circle.}
\label{square:case}

\begin{figure}
\centering
\includegraphics[scale=0.18]{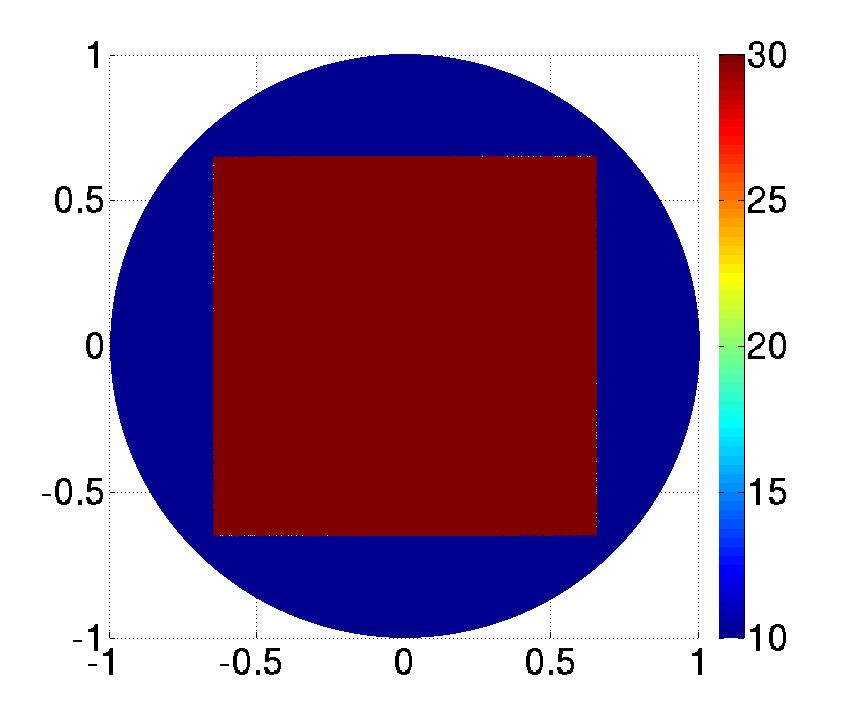}
\caption{Square within the unit circle. The blue surface represents a conductivity $\sigma=10$, and the red indicates $\sigma=100$.}
\end{figure}

As plotted in Figure 4, inside the unitary circle we located a perfect square, whose superior and inferior sides are parallel to the $x$-axis, whereas its left and right sides are parallel to the $y$-axis, being their lengths all equal to $0.65$. The square corners are equidistant to the origin, and we have fixed one point of the set (\ref{bs:10aa}) at each corner, so they can effectively take part into the calculations of the formal powers. For this case, we will impose a boundary condition of the form:

\begin{equation}
u=\frac{x^{3}+y^{3}}{3}+0.1\left(x+y\right).
\label{square:circle}
\end{equation}

The obtained errors are shown in Table 11, and even they are, in general, one thousand times bigger than the errors calculated for the disk out of the center, presented in the previous Subsection, it is important to remark that the presence of the corners did not influence significantly the convergence of the approach on the boundary $\Gamma$.

This implies that, for this conductivity distribution, with the boundary condition (\ref{square:circle}), not any supplementary regularization method is required.

As a matter of fact, the complete following Section is composed by a collection of cases where not any additional regularization methods are needed to successfully approach the boundary condition, in non-smooth bounded domains.

\begin{table}
\centering
\begin{tabular}{| c | c | c | c |}
\hline
\tiny{\textbf{Number of formal powers}} & \tiny{\textbf{Number of radii}} & \tiny{\textbf{Number of points per radius}} & \tiny{\textbf{Total error}} \\
$N$ & $P$ & $Q$ & $\mathcal{E}$ \\
\hline
$40$ & $1000$ & $1000$ & $1.4598\times 10^{-2}$\\
\hline
$40$ & $1000$ & $800$ & $1.4603\times 10^{-2}$\\
\hline
$40$ & $1000$ & $600$ & $1.4601\times 10^{-2}$\\
\hline
$40$ & $1000$ & $400$ & $1.4619\times 10^{-2}$\\
\hline
$40$ & $1000$ & $200$ & $1.4638\times 10^{-2}$\\
\hline
$40$ & $800$ & $1000$ & $1.4160\times 10^{-2}$\\
\hline
$40$ & $600$ & $1000$ & $4.0513\times 10^{-2}$\\
\hline
$40$ & $400$ & $1000$ & $2.7287\times 10^{-2}$\\
\hline
$40$ & $200$ & $1000$ & $1.4638\times 10^{-2}$\\
\hline
$20$ & $1000$ & $1000$ & $2.3516\times 10^{-2}$\\
\hline
$20$ & $500$ & $500$ & $1.6912\times 10^{-2}$\\
\hline
$40$ & $100$ & $100$ & $3.2869\times 10^{-3}$\\
\hline
$20$ & $100$ & $100$ & $3.4552\times 10^{-2}$\\
\hline
$10$ & $50$ & $50$ & $  6.8926\times 10^{-2}$\\
\hline
$5$ & $50$ & $50$ & $  1.9362\times 10^{-1}$\\
\hline
\end{tabular}
\label{table:06}
\caption{Table of values corresponding to the conductivity $\sigma$ of the Figure 4: A square within the unit circle.}
\end{table}

\section{Brief analysis of a non-circular domains.}
\label{pollito}

Many numerical methods, for solving partial differential equations at some domain $\Omega$ in the plane, often require regularization techniques when the derivative of the parametric curve describing $\Gamma$, possesses discontinuities (see \emph{e.g.} \cite{kond}). In general, the method posed in this work is not the exception. Nevertheless, there exist some particular cases such that not any regularization procedures are required, for adequately approaching the boundary condition in non-smooth domains. 

The examples we present in the forward paragraphs were heuristically found, and it is important to clarify that the convergence of the approached solutions was only achieved when establishing a certain set of values. More precisely, we fixed the number of radii $Q=100$, whereas the number of points per radius $P=100$.

It is interesting to remark that, if increasing the number $Q$ in the selected examples, the convergence at the non-smooth regions will not be achieved. Neither will do if decrementing it. Moreover, we found out that the number of formal powers $2N+1$ employed in the approach must be equal to $91$ ($46$ formal powers with coefficient $a_{n}=1$ and $45$ with coefficient $a_{n}=i$), for obtaining the minimum total error $\mathcal{E}$. This will be valid for all examples shown below.

This Section is included mainly for remarking that, the numerical solving of Dirichlet boundary value problems for elliptic equations on the plane, employing the elements of the Pseudoanalytic Function Theory, possesses the property of approaching adequately the boundary condition in non-smooth domains, when certain conditions are fulfilled. This motivate a deeper study of such property, for correctly understanding which are the conditions to warrant the convergence, and for which cases this can be effectively done.

We shall analyse the behaviour of the method on the boundary $\Gamma$ of a non-circular domain, that can be described as follows:
\begin{enumerate}
\item The domain $\Omega$ coincides with the unit circle at the left-hand side of the $y$-axis parallel line $x=\cos\frac{\pi}{10}$.
\item At the right-hand side of this axis, the domain $\Omega$ is the area bounded by the line segments:
\begin{equation}
\begin{array}{c}
y=-0.5629x+ 0.8443,\\
y=0.5629x+ 0.8443;
\end{array}
\label{lines}
\end{equation}
in the closed interval $x\in\left[\cos \frac{\pi}{10},1.5\right].$
\end{enumerate}

\subsection{The Lorentzian case.}

Let us consider a conductivity $\sigma$ within the domain described in the Subsection \ref{pollito}, with the form:
\begin{equation}
\sigma=\left(\frac{1}{x^{2}+0.1}\right)\left(\frac{1}{y^{2}+0.1}\right),
\label{newsigma}
\end{equation}
imposing as the boundary condition the exact solution:
\begin{equation}
u=\frac{\left(x^{3}+y^{3}\right)}{3}+0.1\left(x+y\right).
\label{new:lor}
\end{equation}
The illustration of this example is displayed in Figure 5.

\begin{figure}
\centering
\includegraphics[scale=0.3]{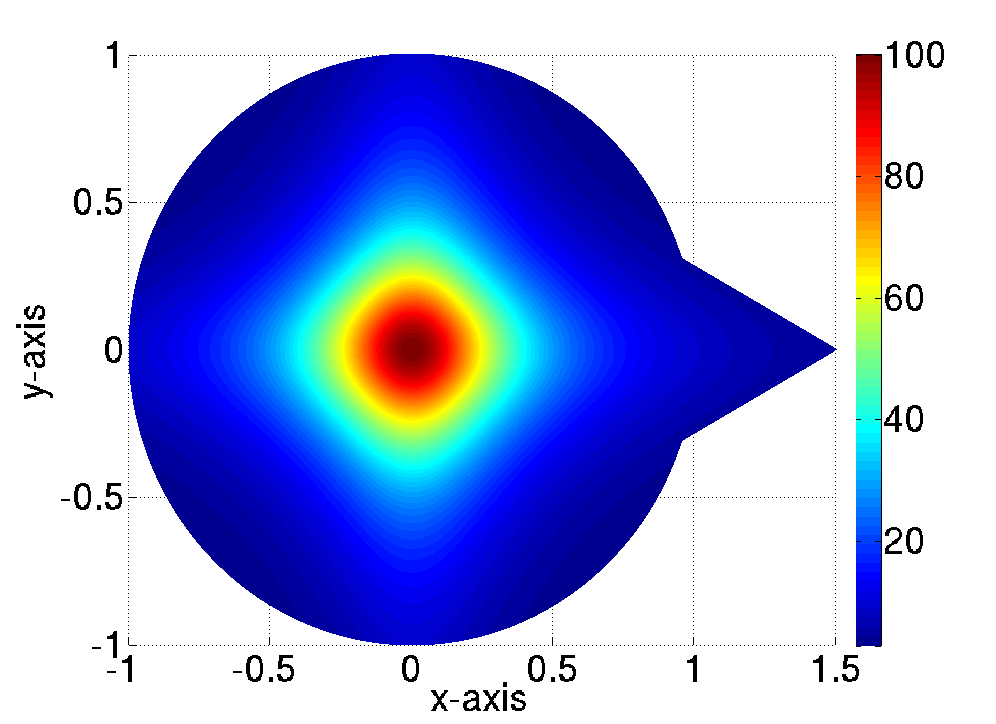}
\caption{Example of a conductivity $\sigma=\left(x^{2}+0.1\right)^{-1}\left(y^{2}+0.1\right)^{-1}$, whithin a non-circular domain.}
\end{figure}

As mentioned above, we shall fix the number of radii $Q=100$, and the number of points per radius $P=100$.
The Figure 6 displays the approach of the boundary condition (\ref{new:lor}) using $2N+1=91$ base functions. We remark that the Figure 6 was traced keeping an angular perspective, broaching an angle $\theta\in\left[-\pi,\pi\right]$, represented by the ordinate-axis, whereas the abscissa-axis plots the electric potential values related to each angle. The total error $\mathcal{E}$ for this example was:
\begin{equation}
\mathcal{E}=1.8355\times 10^{-4}.
\nonumber
\end{equation}

We also mention that one of the points where the boundary condition (\ref{new:lor}) has been valued, was located precisely at the intersection of the line segments defined in (\ref{lines}): $(1.5,0)$, where we find the most representative non-smoothness of the boundary $\Gamma$. This point coincides with the radius corresponding to the angle $\omega=0$. 

Another two boundary value points were fixed at $(1.5-\cos\frac{\pi}{10},\sin\frac{\pi}{10})$ and $(1.5-\cos\frac{\pi}{10},-\sin\frac{\pi}{10})$, where the derivative of the parametric curve describing the boundary $\Gamma$, is also discontinuous. These points are associated with the radii located at the angles $\omega=\frac{\pi}{10}$ and $\omega=-\frac{\pi}{10}$.

\begin{figure}
\centering
\includegraphics[scale=0.30]{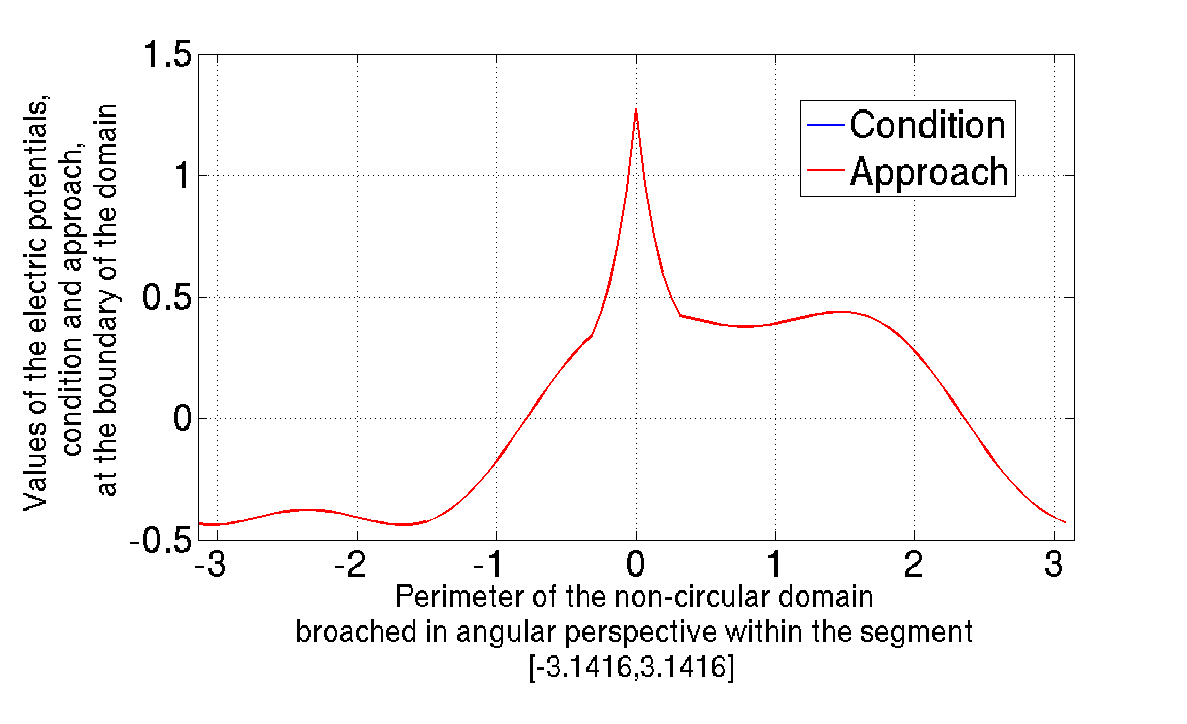}
\caption{Comparison between the boundary condition (\ref{new:lor}) and the numerical approach employing $2N+1=91$ base functions, for the case when $\sigma$ possesses the form $\sigma=\left(x^{2}+0.1\right)^{-1}\left(y^{2}+0.1\right)^{-1}$.}
\end{figure}

When reducing the number of base functions at $2N+1=51$, we observe a considerable divergence of the approached solution, around the boundary value related to the angle $\omega=0$, as it is displayed in Figure 7. Once more, we remark that one boundary value point was fixed at $(1.5,0)$. On the other hand, since the distribution of the boundary value points is performed according to the angular expression (\ref{angle:set}), the points $(1.5-\cos\frac{\pi}{10},\sin\frac{\pi}{10})$ and $(1.5-\cos\frac{\pi}{10},-\sin\frac{\pi}{10})$ were not included in the analysis of this example.

The total error employing $2N+1=51$ base functions was:
\begin{equation}
\mathcal{E}=0.3869.
\nonumber
\end{equation}

\begin{figure}
\centering
\includegraphics[scale=0.30]{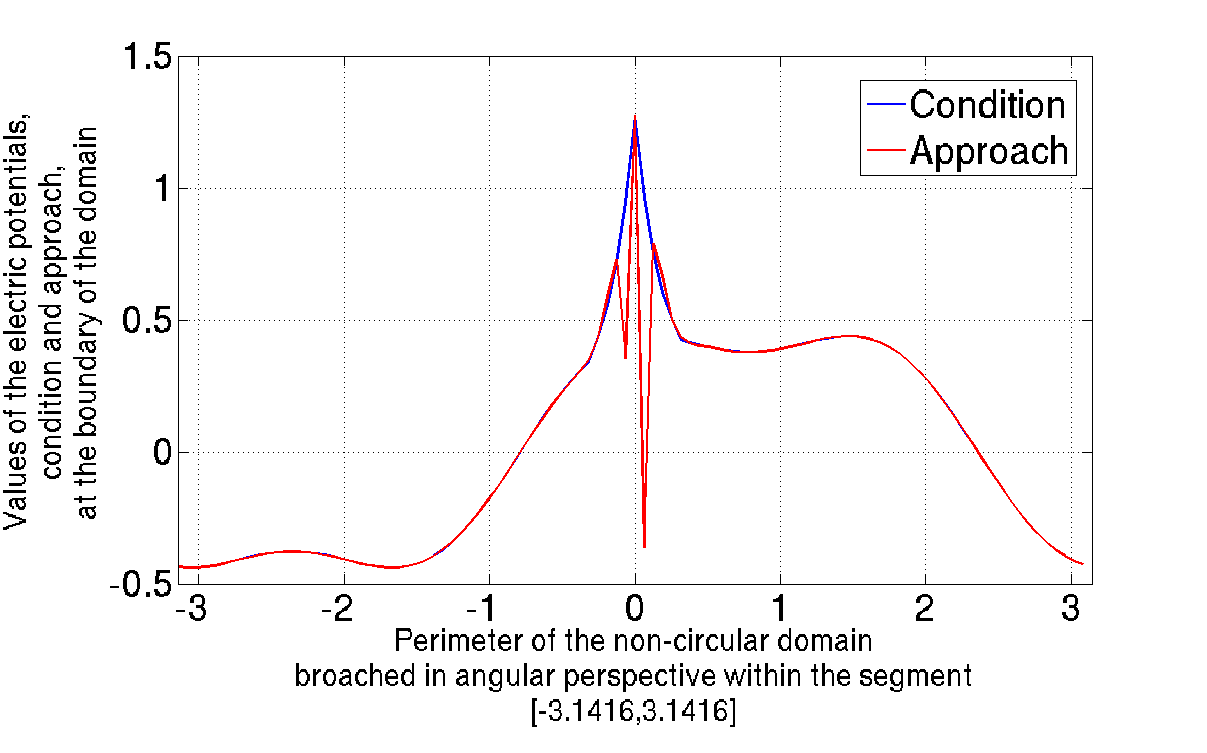}
\caption{Comparison between the boundary condition (\ref{new:lor}) and the numerical approach employing $2N+1=51$ base functions, for the case when $\sigma$ possesses the form $\sigma=\left(x^{2}+0.1\right)^{-1}\left(y^{2}+0.1\right)^{-1}$.}
\end{figure}

\subsection{Concentric circles.}

We shall consider a geometrical variation of the conductivity distribution introduced in (\ref{ns:08}), within the same domain $\Omega$ described at the beginning of the Section \ref{pollito}, as shown in Figure 8. For this example, we will perform the initial experiment taking into account $2N+1=91$ base functions. We also fix three boundary value points at the discontinuities of the derivative of $\Gamma$ indicated before.

\begin{figure}
\centering
\includegraphics[scale=0.3]{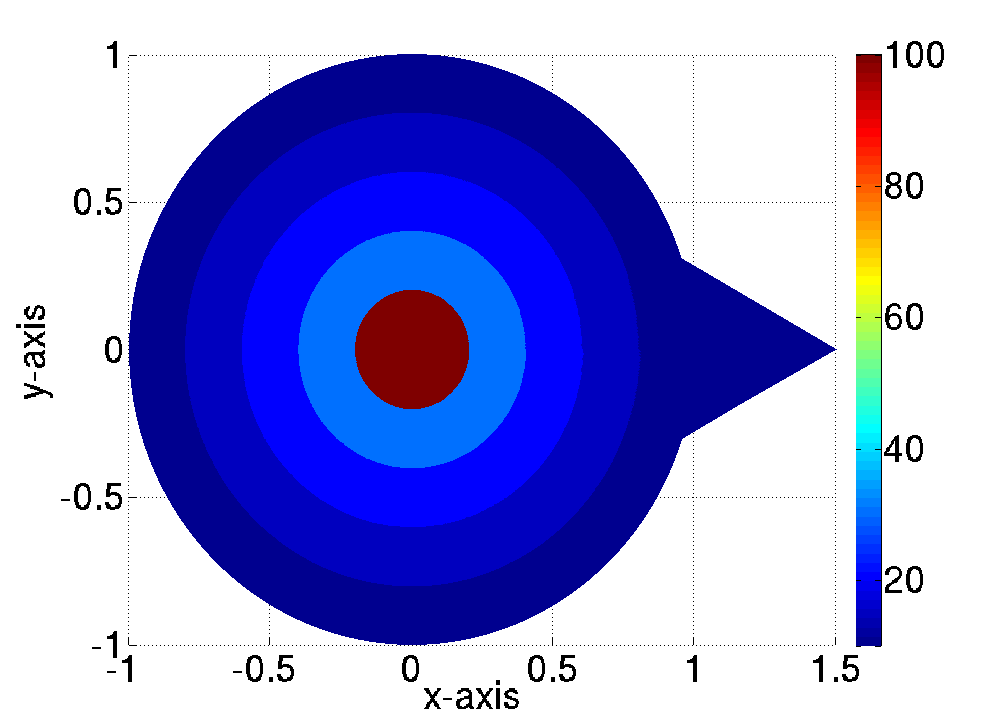}
\caption{Example of a conductivity $\sigma$ composed by concentric circles in a non-circular domain.}
\end{figure}

We choose to impose the boundary condition employed in (\ref{new:lor}), and the Figure 9 illustrates the behaviour of the approached solution when compared with the boundary condition. The total error was:
\begin{equation}
\mathcal{E}=3.4217\times 10^{-4}.
\nonumber
\end{equation}

\begin{figure}
\centering
\includegraphics[scale=0.30]{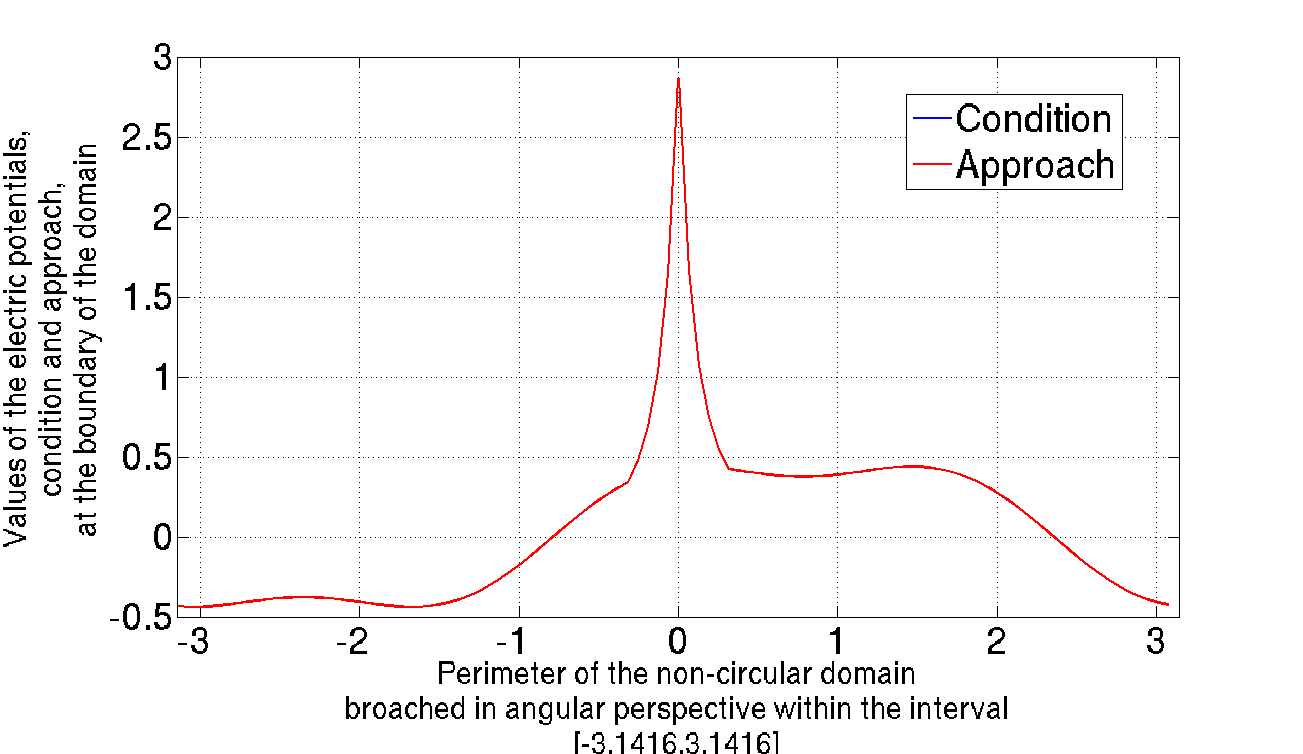}
\caption{Comparison between the boundary condition (\ref{new:lor}) and the numerical approach employing $2N+1=91$ base functions, for the case when $\sigma$ possesses the form plotted in Figure 8.}
\end{figure}

The second experiment took into account only $2N+1=51$ base functions. Again, the boundary value points were distributed at the perimeter of the domain $\Omega$ according to the angular model (\ref{angle:set}), empathizing that one point was fixed at $(1.5,0)$. As displayed in figure 10, the highest divergence of the approached solution is located around this point.

\begin{figure}
\centering
\includegraphics[scale=0.30]{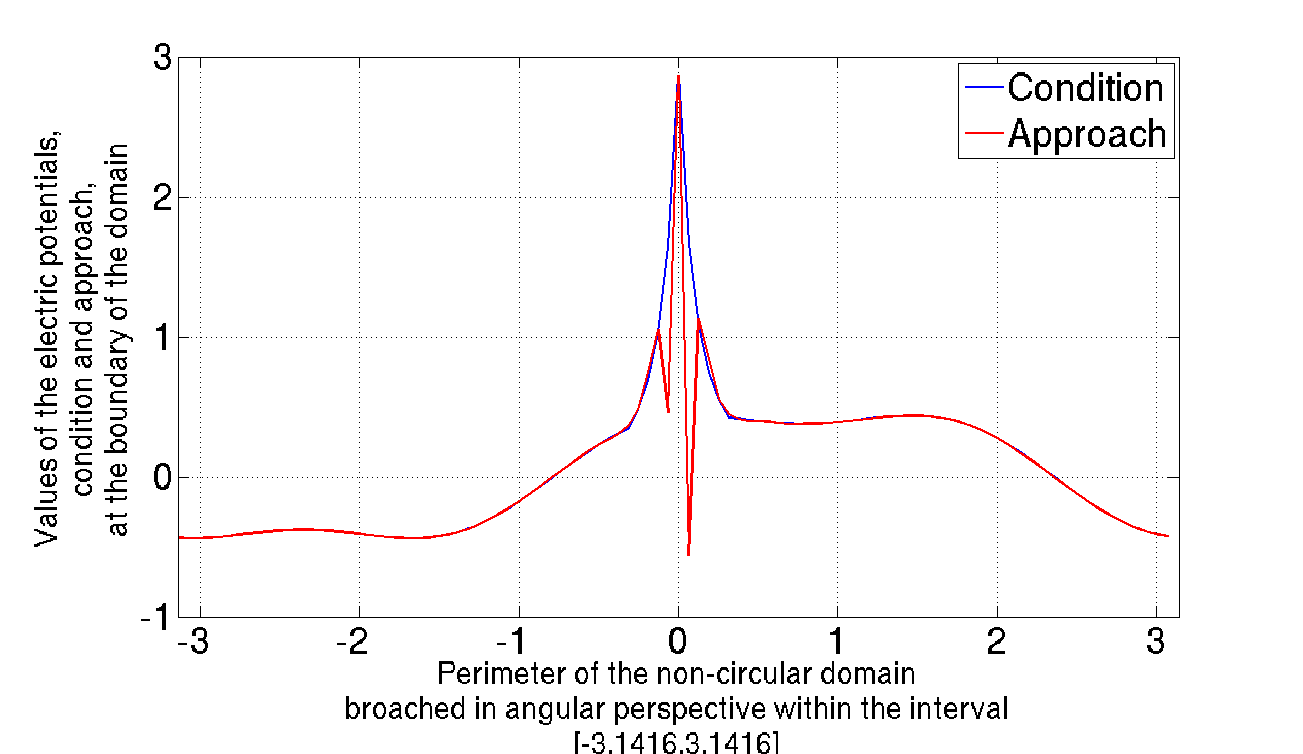}
\caption{Comparison between the boundary condition (\ref{new:lor}) and the numerical approach employing $2N+1=51$ base functions, for the case when $\sigma$ possesses the form plotted in Figure 8.}
\end{figure}

\subsection{A squared figure inside the non-circular domain.}

As a final trial, let us review the case when the conductivity $\sigma$ corresponds to the distribution plotted in Figure 11. This could well be considered a significant example for studying the effectiveness of the posed numerical method. 

\begin{figure}
\centering
\includegraphics[scale=0.3]{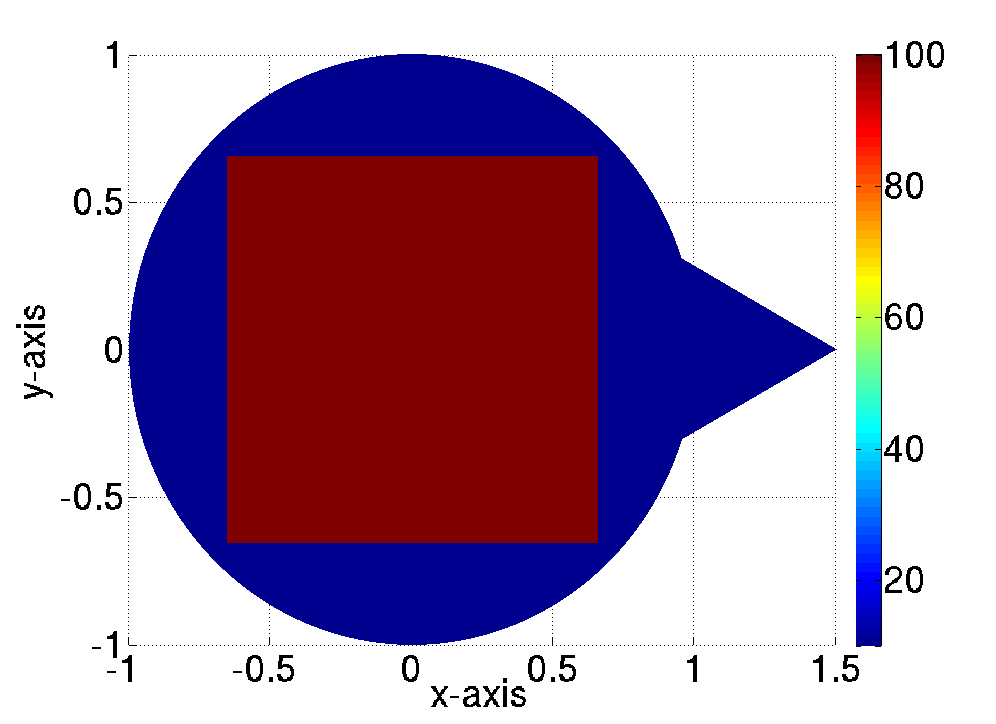}
\caption{Example of a conductivity $\sigma$ with a squared figure inside $\Omega$.}
\end{figure}

We find three points where the derivative of the parametric curve, describing $\Gamma$, is discontinuous, and we have forced one boundary value point to be located at each one of them. Beside, the figure within the domain $\Omega$ possesses four corners. More precisely, it is a perfect square whose characteristics are the same that those employed in the Subsection \ref{square:case}. The first example considered $2N+1=91$ base functions, being its total error: 

\begin{equation}
\mathcal{E}= 0.0010.
\nonumber
\end{equation}

The comparison between the boundary condition and the approached solution is displayed in Figure 12. The last trial was performed utilizing $2N+1=71$ base functions, forcing a boundary value point at $(1.5,0)$.

In the proximity of this point the divergence of the solution is considerable, as displayed in Figure 13, an this is also shown by the increased total error, that for this case was:

\begin{equation}
\mathcal{E}= 2.4867.
\nonumber
\end{equation}

\begin{figure}
\centering
\includegraphics[scale=0.30]{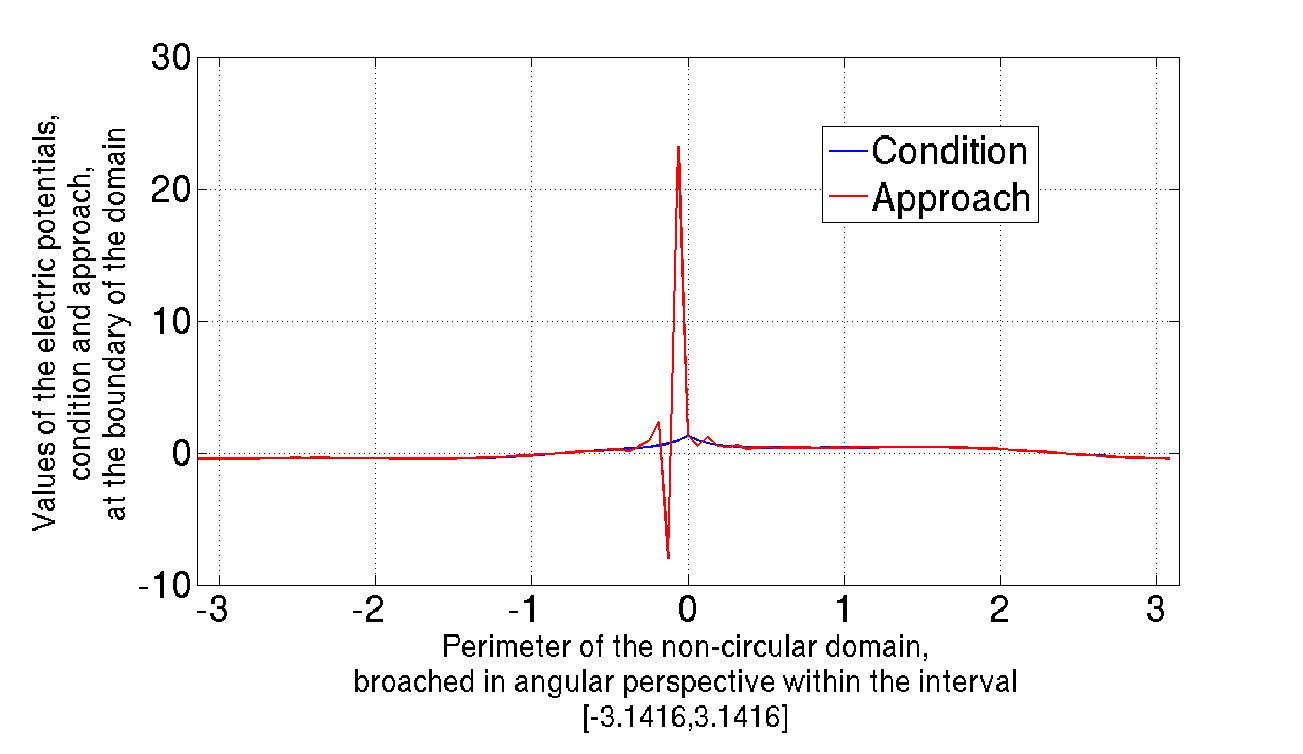}
\caption{Comparison between the boundary condition (\ref{new:lor}) and the numerical approach employing $2N+1=71$ base functions, for the case when $\sigma$ possesses the form plotted in Figure 11.}
\end{figure}

\begin{appointing}
The numerical methods used along this work were fully developed in \textit{GNU C/C++ Compiler}, employing a CPU INTEL\textsuperscript{\textregistered} processor XEON\textsuperscript{\textregistered} E5620$\times 64$B@2.4 GHz, on SLACKWARE 13.37 LINUX operating system. 

The experimental procedures showed that the numerical results can variate when using different platforms based in $32$B and $64$B processor architecture, or compilers between Microsoft\textsuperscript{\textregistered} Windows\textsuperscript{\textregistered} and LINUX operating systems.

If the reader wishes to perform his own numerical trials, please contact the authors to obtain the resource codes.
\end{appointing}

\begin{acknowledgement} The authors would like to acknowledge the support of CONACyT project 106722, and the support of HILMA S.A. de C.V., Mexico. C.M.A. Robles G. acknowledges the support CONACyT project 81599, and La Salle University for the research stay. R.A. Hernandez-Becerril thanks the support of CONACyT. 
\end{acknowledgement}

\end{document}